\theoremstyle{plain}
\newtheorem{thm}{Theorem}
\theoremstyle{remark}
\theoremstyle{definition}
\newtheorem{assumption}{Assumption}
\begin{document}
\title{Dyadic data with ordered outcome variables}
\author{
  Chris Muris\thanks{Department of Economics, McMaster University, email: muerisc@mcmaster.ca.} \and 
  Cavit Pakel\thanks{Department of Economics, University of Oxford, email: cavit.pakel@economics.ox.ac.uk.} \and 
  Qichen Zhang\thanks{Department of Economics, McMaster University, email: zhanq73@mcmaster.ca.}
}
\date{First version: February, 2023. This version: \today.}
\maketitle

\begin{abstract}
We consider ordered logit models for directed network data that allow for flexible sender and receiver fixed effects that can vary arbitrarily across outcome categories.
This structure poses a significant incidental parameter problem, particularly challenging under network sparsity or when some outcome categories are rare.
We develop the first estimation method for this setting by extending tetrad-differencing conditional maximum likelihood (CML) techniques from binary choice network models.
This approach yields conditional probabilities free of the fixed effects, enabling consistent estimation even under sparsity.
Applying the CML principle to ordered data yields multiple likelihood contributions corresponding to different outcome thresholds.
We propose and analyze two distinct estimators based on aggregating these contributions: an Equally-Weighted Tetrad Logit Estimator (ETLE) and a Pooled Tetrad Logit Estimator (PTLE).
We prove PTLE is consistent under weaker identification conditions, requiring only sufficient information when pooling across categories, rather than sufficient information in each category.
Monte Carlo simulations confirm the theoretical preference for PTLE, and an empirical application to friendship networks among Dutch university students demonstrates the method's value.
Our approach reveals significant positive homophily effects for gender, smoking behavior, and academic program similarities, while standard methods without fixed effects produce counterintuitive results.
\end{abstract}

\section{Introduction}
\label{sec:introduction}

Dyadic data, capturing interactions between pairs of agents such as individuals, firms, or countries, are common in economics and social sciences.
While many studies focus on the binary existence of a link, these relationships commonly have ordered levels.
For example, countries may choose different alliance strengths, ranging from non-alignment to defense treaties \parencite{laiDemocracyPoliticalSimilarity2000a}.
Similarly, individuals rate their social relationships on scales from ``troubled'' to ``best friendship,'' as in the student network data we analyze later \parencite{vanduijnFrameworkComparisonMaximum2009}.
To estimate these relationship determinants reliably, we need to account for the agents' overall tendency to connect with others.

This paper develops methods for analyzing ordered outcomes in directed dyadic data while controlling for agent-specific heterogeneity.
We specify an ordered logit model where the outcome $Y_{ij} \in \{0, 1, \dots, M\}$ for each dyad $(i,j)$ depends on covariates $X_{ij}$ and node-specific fixed effects.
Our specification includes a sender effect $\lambda_{im}$ and a receiver effect $\delta_{jm}$ that can vary across outcome categories, allowing for flexible heterogeneity patterns.
This specification presents estimation challenges due to the large number of fixed effects, especially in sparse networks where certain outcome categories are rare.

Our primary contribution is developing a practical way to estimate ordered logit models with flexible category-specific fixed effects for network data.
We adapt the conditional likelihood approach based on tetrad differencing, which has worked well for binary network models.
This technique compares specific patterns within groups of four nodes (tetrads) to eliminate the fixed effects.
By constructing these comparisons with binarized outcomes, we obtain probabilities that depend only on the covariates and the parameters of interest.
This approach does not require directly estimating the fixed effects and works well even in sparse networks, where many potential relationships are not observed or certain outcome categories are rare.

Our second key contribution concerns how to best use information from different outcome thresholds.
When we transform the ordered outcome into binary indicators (i.e., whether $Y_{ij} \geq m$), we get $M$ different sets of conditional probabilities, one for each threshold.
This structure suggests two natural estimation approaches.
The first estimator (ETLE) gives equal weight to the information from each threshold.
The second estimator (PTLE) pools all informative comparisons into a single estimation.
We show that PTLE is preferable because it works under weaker conditions.
While ETLE requires enough information at each individual threshold, PTLE only needs enough information when combining all thresholds together.
This makes PTLE more reliable when certain outcome categories are rare, which is common in network data.

Third, we explore how the identification approach changes with different fixed effects structures.
Besides our main flexible model, we analyze two simpler versions: one where node effects ($\alpha_i, \gamma_j$) are constant across categories with common thresholds ($\lambda_m$), and another with category-specific sender effects but category-invariant receiver effects ($\lambda_{im}, \gamma_j$).
We show that the tetrad comparisons must be constructed differently for each model structure to properly eliminate the fixed effects.
This reveals how model assumptions affect the identification strategy.

Our work connects to several related bodies of literature while making distinct contributions.
In comparison to the panel data ordered choice literature \parencite[\textit{e.g.,}][]{dasPanelDataModel1999,johnsonPanelDataModels2004,Baetschmann2015,murisEstimationFixedEffectsOrdered2017, botosaruIdentificationTimevaryingTransformation2023}, we tackle the distinct challenge of network dependence from overlapping dyads.
Relative to the network formation literature \parencite{charbonneauMultipleFixedEffects2017,grahamEconometricModelNetwork2017, jochmansSemiparametricAnalysisNetwork2018}, our primary innovation lies in addressing ordered outcomes.
This introduces complexities related to multiple thresholds, category-specific sparsity, and the consequent PTLE versus ETLE estimator choice, aspects not present in binary models.
Our use of conditional likelihood also distinguishes our approach from methods employing bias correction for fixed effects in networks \parencite{dzemskiEmpiricalModelDyadic2019,hughesEstimatingNonlinearNetwork2023a}.
Concurrent work by \textcite{sziniPairwiseDifferencingDistribution2025} also employs tetrad differencing for network fixed effects models but focuses on estimating distribution regression coefficients $\beta(y)$ for potentially continuous outcomes, rather than the single common parameter $\beta$ in an ordered logit specification.

We validate our theoretical results through extensive Monte Carlo simulations.
These experiments confirm the consistency of the PTLE under weaker conditions and highlight its finite-sample performance compared to the ETLE, particularly when identification is weak for some categories or the network is sparse.
Furthermore, we illustrate the practical application of the PTLE method by analyzing the determinants of friendship intensity within a network of Dutch university students \parencite{vanduijnFrameworkComparisonMaximum2009}, showing that flexible node-specific heterogeneity is essential for reliable inference: standard approaches without fixed effects yield negative coefficient signs where positive homophily is expected, while our method uncovers significant positive homophily effects across gender, smoking behavior, and academic program dimensions.

The rest of the paper proceeds as follows.
Section \ref{sec:modelII} details the model specification.
Section \ref{sec:identification} presents the identification strategy for the main model.
Section \ref{sec:estimationII} develops the ETLE and PTLE estimators and provides consistency results.
Section \ref{sec:simulation} reports the simulation findings.
Section \ref{sec:empirical-application} discusses the empirical application.
Section \ref{sec:alternative_specifications} analyzes the alternative specifications.
The appendix contains all proofs.
\section{Model}
\label{sec:modelII}

Consider a set of $N$ nodes $\mathbb{N}_{N} = \left\{ 1,\ldots,N\right\}$.
Denote the set of all pairs by
$$\mathcal I_N = \left\{(i,j):\;
i \in \mathbb{N}_{N},\;
                 j \in \mathbb{N}_{N},\; 
                 i\neq j\right\}.$$ 
For each such dyad,%
\footnote{
    Our results can be generalized to settings where data are available only on a subset of dyads.
    The assumption that data are available for each dyad simplifies notation.
}
we observe an ordered dependent variable $Y_{ij}\in\left\{ 0,\cdots,M\right\}$ and a vector of regressors $X_{ij}\in\mathbb{R}^{k}$.

In our model, the observed outcome $Y_{ij}$ is determined by an underlying latent variable $Y^*_{ij}$ crossing category-specific thresholds $\lambda_{ijm}^*$:
\begin{equation}
Y_{ij} \geq m \Leftrightarrow Y^*_{ij} \geq \lambda_{ijm}^{*}, \quad m=1,\cdots,M. \label{eq:threshold}
\end{equation}
The latent variable represents the unobserved propensity for a relationship between nodes $i$ and $j$ and is defined as $Y^*_{ij}=X_{ij}^\prime \beta-\epsilon_{ij}$.
We use the convention $\lambda_{ij0}^* = -\infty$, consistent with $Y_{ij} \ge 0$ always holding, and require thresholds to be ordered: $\lambda_{ijm'}^* \geq \lambda_{ijm}^*$ whenever $m' \geq m$.
This formulation extends the standard ordered choice model to dyadic data.

We assume that the error terms $\epsilon_{ij}$ follow the standard logistic distribution with cumulative distribution function $\Lambda(z) = e^z / (1+e^z)$.
We further assume that $\epsilon_{ij}$ are independent across dyads and independent of regressors $\mathbf X = (X_{ij} : (i,j) \in \mathcal I_N)$.

If $M = 1$, the specification reduces to a binary choice model with $Y_{ij} \in \{0,1\}$.
This case is equivalent to the network formation models studied in \textcite{charbonneauMultipleFixedEffects2017} and \textcite{jochmansSemiparametricAnalysisNetwork2018}, see also \textcite{grahamEconometricModelNetwork2017}.

In our main specification, we impose structure on the thresholds $\lambda_{ijm}^*$.
We decompose it into sender-specific and receiver-specific components that can vary across outcome categories $m$:
\[
\lambda_{ijm}^{*} = \lambda_{im} + \delta_{jm}.
\]

This decomposition has a natural interpretation in many applications. 
For instance, in a friendship formation context like our empirical application, individuals rate relationships on a six-point scale.
Some individuals might consistently require higher thresholds to categorize relationships at higher levels (captured by variation in $\lambda_{im}$ across senders $i$), while some individuals might be more easily considered favorably by others (captured by variation in $\delta_{jm}$ across receivers $j$).
The specification permits these threshold effects to vary differently across categories for each agent.
For example, some students might exhibit steep threshold increases between lower and higher relationship categories, while others might progress more gradually across the full range of relationship types.
This heterogeneity in threshold profiles is fully accommodated by allowing both $\lambda_{im}$ and $\delta_{jm}$ to vary freely across both nodes and categories.

Throughout, we adopt a fixed effects approach, in the sense that we do not impose restrictions on the relationship between the threshold components $(\lambda_{im}, \delta_{jm})$ and the covariates $X_{ij}$.
These components are treated as incidental parameters.

The data available to the researcher are the covariates $\mathbf X$ and the outcomes $\mathbf Y = (Y_{ij} : (i,j) \in \mathcal I_N)$.
Let $\beta_0$ denote the true value of the regression coefficient vector $\beta$.
Let $F_i = ((\lambda_{im}, \delta_{im}): m = 1,\cdots,M)$ represent the collection of threshold parameters associated with node $i$ as both sender and receiver.
We gather all these fixed effects across the network into the vector $\mathbf F = (F_i : i \in \mathbb N_N)$, containing a total of $2NM$ incidental parameters.

Assumption \ref{a:likelihood_II} formally states the conditional likelihood based on these components.
\begin{assumption}\label{a:likelihood_II}
    The likelihood of the ordered choices $\mathbf Y = \mathbf y$ conditional on the covariates $\mathbf X$ and the parameters $(\beta_0, \mathbf{F})$ is
    \[
        P(\mathbf Y = \mathbf y| \mathbf X, \mathbf{F}) = \prod_{(i,j)\in\mathcal I_N} P(Y_{ij} = y_{ij}|X_{ij}, F_i, F_j),
    \]
    with the likelihood for a single dyad given by
    \[
    P(Y_{ij} = m | X_{ij}, F_i, F_j) =
    \begin{dcases}
    1 - \Lambda\left( X_{ij}^\prime \beta_0 - \lambda_{i1} - \delta_{j1}\right), & \text{if } m = 0, \\
    \begin{aligned}
        &\Lambda\left( X_{ij}^\prime \beta_0 - \lambda_{im} - \delta_{jm} \right) \\
        &- \Lambda\left( X_{ij}^\prime \beta_0 - \lambda_{i,m+1} - \delta_{j,m+1} \right),
    \end{aligned}
    & \text{if } 1 \leq m \leq M-1, \\
    \Lambda\left( X_{ij}^\prime \beta_0 - \lambda_{iM} - \delta_{jM} \right), & \text{if } m = M.
    \end{dcases}
    \]
\end{assumption}

\section{Identification}
\label{sec:identification}

In this section, we establish the identification of the regression coefficient $\beta_0$.
A key challenge in estimation is that the number of incidental parameters $\mathbf F$ grows with the sample size, leading to an incidental parameter problem.
We address this by developing an approach that eliminates these fixed effects from the likelihood function.

We begin by transforming our ordered outcome variable $Y_{ij}$ into a series of binary variables based on exceeding specific thresholds.
For each possible threshold value $m \in \{1,\cdots,M\}$, we define:
\[
D_{ij}(m) = \mathbf{1}\{Y_{ij} \geq m\}, \quad \forall (i,j)\in\mathcal I_N.
\]
This transformation is standard ordered choice models, allowing us to reframe the problem in terms of binary outcomes.
Under Assumption \ref{a:likelihood_II},
\begin{align}
P(D_{ij}(m) = 1 | X_{ij}, F_i, F_j) 
&= P(Y_{ij} \geq m | X_{ij}, F_i, F_j) \\ 
&= \Lambda( X_{ij}^\prime \beta_0 - \lambda_{im} - \delta_{jm}).
\end{align}
This probability depends on the node-specific fixed effects $\lambda_{im}$ and $\delta_{jm}$.

Our identification strategy builds on advances in binary choice models for dyadic data by using tetrads.
In our context, a tetrad is a set of four distinct nodes comprising two sending nodes ($i_1, i_2$) and two receiving nodes ($j_1, j_2$).
We define the set of all $q_N = \binom{N}{2} \binom{N-2}{2}$ possible tetrads as:
\begin{equation}
    \Sigma = \{\sigma = (i_1,i_2,j_1,j_2) \in \mathbb N_N^4 : 
               i_1, i_2, j_1, j_2 \text{ are all distinct nodes}
             \}.
\label{def:Sigma}
\end{equation}
A generic tetrad $\sigma$ can exhibit none, some, or all of the potential dyadic relationships: $(i_1,j_1)$, $(i_1, j_2)$, $(i_2, j_1)$, and $(i_2,j_2)$.

To identify the parameters, we analyze the pattern of binary outcomes within each tetrad.
We allow for different cutoffs $m_{rs}$ for each dyad $(i_r, j_s)$ within the tetrad.
Let $\mathbf m = (m_{11},m_{12},m_{21},m_{22})$ denote a vector of four potentially different cutoffs.

For any tetrad $\sigma$ and cutoff vector $\mathbf{m}$, we define a difference-in-differences statistic based on the binary variables:
\begin{equation}
    Z_\sigma(\mathbf m) = \frac{1}{2}\left(
        (D_{i_1,j_1}(m_{11}) - D_{i_1,j_2}(m_{12})) 
                             - 
        (D_{i_2,j_1}(m_{21}) - D_{i_2,j_2}(m_{22}))
    \right).
\label{def:Z}
\end{equation}
This construction generalizes the approach of \textcite{charbonneauMultipleFixedEffects2017} and \textcite{jochmansSemiparametricAnalysisNetwork2018} for binary outcomes.

Identification relies on cases where $Z_\sigma(\mathbf m) \in \{-1,1\}$, representing specific contrasting patterns of outcomes within the tetrad:
\begin{itemize}
    \item $Z_\sigma(\mathbf m) = +1 \Leftrightarrow D_{i_1,j_1}(m_{11}) = 1, D_{i_1,j_2}(m_{12}) = 0, D_{i_2,j_1}(m_{21}) = 0, D_{i_2,j_2}(m_{22}) = 1$
    \item $Z_\sigma(\mathbf m) = -1 \Leftrightarrow D_{i_1,j_1}(m_{11}) = 0, D_{i_1,j_2}(m_{12}) = 1, D_{i_2,j_1}(m_{21}) = 1, D_{i_2,j_2}(m_{22}) = 0$.
\end{itemize}
Below we will show that these configurations allow for conditional inferences free of fixed effects.

Let $X_\sigma = (X_{i_1,j_1},X_{i_1,j_2},X_{i_2,j_1},X_{i_2,j_2})$ collect the covariates for the tetrad.
Define the differenced covariate vector as:
\begin{align*}
    r_\sigma &= (X_{i_1,j_1} - X_{i_1,j_2}) - (X_{i_2,j_1} - X_{i_2,j_2}).
\end{align*}

The conditional probability of $Z_\sigma(\mathbf m)=1$ given $Z_\sigma(\mathbf m) \in \{-1, 1\}$ depends on $r_\sigma'\beta_0$ and a difference-in-differences term involving the fixed effects:
\begin{align}
    \Delta \lambda(\mathbf{m}) &=
    (\lambda^*_{i_1,j_1,m_{11}} - \lambda^*_{i_1,j_2,m_{12}}) - (\lambda^*_{i_2,j_1,m_{21}} - \lambda^*_{i_2,j_2,m_{22}}) \\
    &=  (
            (\lambda_{i_1,m_{11}} + \delta_{j_1,m_{11}})
            - 
            (\lambda_{i_1,m_{12}} + \delta_{j_2,m_{12}})
        )
        \nonumber \\
        &\quad - 
        (
            (\lambda_{i_2,m_{21}} + \delta_{j_1,m_{21}})
            - 
            (\lambda_{i_2,m_{22}} + \delta_{j_2,m_{22}})
        ).
\label{def:DeltaLambda}
\end{align}

As shown in the proof of Theorem \ref{thm:directed_II_sufficiency}, Assumption \ref{a:likelihood_II} implies:
$$
    P\left(\left. Z_{\sigma}(\mathbf m) = 1 \right| Z_{\sigma}(\mathbf m) \in \{-1,+1\}, X_\sigma, \mathbf{F} \right) 
    = 
    \Lambda\left(
    r_\sigma^\prime \beta_0 - 
    \Delta \lambda(\mathbf{m}) \right).
$$

For this conditional probability to yield inference on $\beta_0$ free of the fixed effects $\mathbf{F}$, the term $\Delta \lambda(\mathbf{m})$ must vanish.
Examining the structure of $\Delta \lambda(\mathbf{m})$ in Equation \eqref{def:DeltaLambda}, we see that terms like $\lambda_{i_1,m_{11}}$ and $\lambda_{i_1,m_{12}}$ only cancel if $m_{11} = m_{12}$.
Similarly, cancellation of the $\lambda_{i_2,\cdot}$ terms requires $m_{21} = m_{22}$, cancellation of the $\delta_{j_1,\cdot}$ terms requires $m_{11} = m_{21}$, and cancellation of the $\delta_{j_2,\cdot}$ terms requires $m_{12} = m_{22}$.
Therefore, for $\Delta \lambda(\mathbf{m})$ to be zero solely based on the structure (without imposing restrictions on the values of the fixed effects), we must impose the constraint that all four cutoffs are equal: $m_{11} = m_{12} = m_{21} = m_{22} = m$.%
\footnote{
    In Section \ref{sec:alternative_specifications}, we consider identification under additional structural restrictions on the thresholds. These more restrictive specifications allow for a richer variety of cutoff combinations $\mathbf{m}$ and enable identification of threshold differences, as shown in related work such as \textcite{baetschmannIdentificationEstimationThresholds2012, murisEstimationFixedEffectsOrdered2017}.}

For any single cutoff $m \in \{1,\cdots,M\}$, define:
\begin{align}
    \overline Z_\sigma(m) &\equiv Z_\sigma((m,m,m,m)) \\
                          &= \frac{1}{2}\left(
                                (D_{i_1,j_1}(m) - D_{i_1,j_2}(m)) 
                                      - 
                                (D_{i_2,j_1}(m) - D_{i_2,j_2}(m))
                            \right). \label{def:Z_bar}
\end{align}
This notation $\overline Z_\sigma(m)$ represents the tetrad statistic when the same cutoff $m$ is applied to all four dyads.
We call tetrads with $\overline{Z}_\sigma(m) \in \{-1,+1\}$ informative tetrads.

We can now state our main identification result:

\begin{thm}[Sufficiency]\label{thm:directed_II_sufficiency}
    Under Assumption \ref{a:likelihood_II}, for any tetrad $\sigma$ and any cutoff $m \in \{1,\cdots,M\}$, the conditional probability of $\overline{Z}_\sigma(m)=1$ given that the tetrad is informative ($\overline{Z}_\sigma(m) \in \{-1, +1\}$) and given the covariates $X_\sigma$ is:
    $$
    P\left(\left. \overline Z_{\sigma}(m) = 1 \right| \overline Z_{\sigma}(m) \in \{-1,+1\}, X_{\sigma}\right) = \Lambda\left(r_\sigma^\prime \beta_0 \right).
    $$
\end{thm}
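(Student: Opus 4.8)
The plan is to exploit the conditional independence of dyadic outcomes asserted in Assumption \ref{a:likelihood_II}, together with the log-odds structure of the logistic CDF, to compute the required conditional probability directly. I would work conditional on both the covariates $X_\sigma$ and the full fixed-effect vector $\mathbf{F}$, establish the general-cutoff formula quoted just before the theorem, and only at the very end specialize to a common cutoff and remove the dependence on $\mathbf{F}$.

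First I would introduce shorthand for the four dyadic success probabilities. For a general cutoff vector $\mathbf{m} = (m_{11}, m_{12}, m_{21}, m_{22})$, write $a_{rs} = X_{i_r j_s}^\prime \beta_0 - \lambda_{i_r, m_{rs}} - \delta_{j_s, m_{rs}}$, so that $P(D_{i_r j_s}(m_{rs}) = 1 \mid X_\sigma, \mathbf{F}) = \Lambda(a_{rs})$. Because the four dyads $(i_1,j_1), (i_1,j_2), (i_2,j_1), (i_2,j_2)$ involve distinct node pairs, the independence-across-dyads component of Assumption \ref{a:likelihood_II} renders the corresponding $D$'s conditionally independent given $(X_\sigma, \mathbf{F})$. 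The two informative configurations enumerated before the theorem then factor as
\[
P(Z_\sigma(\mathbf{m}) = +1 \mid X_\sigma, \mathbf{F}) = \Lambda(a_{11})\bigl(1 - \Lambda(a_{12})\bigr)\bigl(1 - \Lambda(a_{21})\bigr)\Lambda(a_{22}),
\]
\[
P(Z_\sigma(\mathbf{m}) = -1 \mid X_\sigma, \mathbf{F}) = \bigl(1 - \Lambda(a_{11})\bigr)\Lambda(a_{12})\Lambda(a_{21})\bigl(1 - \Lambda(a_{22})\bigr).
\]

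Next I would form the conditional probability as the ratio $P(+1)/\bigl(P(+1) + P(-1)\bigr)$ and simplify via the identity $\Lambda(z)/(1 - \Lambda(z)) = e^z$. This collapses the odds to $P(+1)/P(-1) = \exp(a_{11} - a_{12} - a_{21} + a_{22})$, so the conditional probability equals $\Lambda(a_{11} - a_{12} - a_{21} + a_{22})$. Expanding and regrouping the exponent, the covariate part gives exactly $r_\sigma^\prime \beta_0$ while the fixed-effect part is precisely $-\Delta\lambda(\mathbf{m})$ from Equation \eqref{def:DeltaLambda}, which reproduces the general formula $\Lambda\bigl(r_\sigma^\prime \beta_0 - \Delta\lambda(\mathbf{m})\bigr)$.

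Finally I would specialize to $\mathbf{m} = (m,m,m,m)$. With a single common cutoff, each sender term $\lambda_{i_1, m}$ enters $a_{11}$ and $a_{12}$ with opposite signs and cancels, and likewise for $\lambda_{i_2, m}$, $\delta_{j_1, m}$, and $\delta_{j_2, m}$; equivalently $\Delta\lambda((m,m,m,m)) = 0$. The conditional probability thus reduces to $\Lambda(r_\sigma^\prime \beta_0)$, which crucially no longer depends on $\mathbf{F}$. Since this value is constant in $\mathbf{F}$, a law-of-iterated-expectations argument over the conditional distribution of $\mathbf{F}$ given $(X_\sigma, \{\overline Z_\sigma(m) \in \{-1,+1\}\})$ removes the conditioning on $\mathbf{F}$ and delivers the stated result. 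There is no deep obstacle here; the only point demanding care is the bookkeeping in the cancellation of the fixed effects, which succeeds solely because the same cutoff is applied to all four dyads — precisely the restriction motivated in the discussion preceding the theorem.
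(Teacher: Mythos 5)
Your proposal is correct and follows essentially the same route as the paper's proof: conditional independence of the four dyads given $(X_\sigma,\mathbf{F})$, factorization of the two informative configurations, the odds-ratio simplification yielding $\Lambda\left(r_\sigma^\prime\beta_0 - \Delta\lambda(\mathbf{m})\right)$, and specialization to a common cutoff so that $\Delta\lambda((m,m,m,m))=0$. Your explicit law-of-iterated-expectations step to drop the conditioning on $\mathbf{F}$ is a slightly more careful rendering of what the paper states in one line, but it is the same argument.
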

\begin{proof}
See Appendix \ref{sec:proofs}, page \pageref{proof:sufficiency_II}. 
\end{proof}

This theorem demonstrates that by conditioning on informative tetrads, we obtain a conditional probability for the binary outcome $\overline Z_\sigma(m)=1$ that follows a standard logistic regression model with predictor $r_\sigma$.
This conditional probability depends only on the observed covariates $X_\sigma$ (through $r_\sigma$) and the parameter of interest $\beta_0$, and does not depend on any of the fixed effects $(\lambda_{im}, \delta_{jm})$ contained in $\mathbf{F}$.
This allows for consistent estimation of $\beta_0$ without needing to estimate the $2NM$ incidental parameters, thereby circumventing the incidental parameter problem.

From Theorem \ref{thm:directed_II_sufficiency}, identification of $\beta_0$ can be established for any cutoff $m$ under standard rank conditions.
Specifically, $\beta_0$ is identified if the expected outer product of the score for the conditional likelihood has full rank:
$$
\beta_0 = \left(E[r_\sigma^\prime r_\sigma]\right)^{-1} 
                E[r_\sigma^\prime \Lambda^{-1}(P\left(\left. \overline Z_\sigma(m) = 1 \right| \overline Z_\sigma(m) \in \{-1,+1\}, X_\sigma\right))].
$$
This demonstrates that $\beta_0$ can be recovered from the observable data distribution $(\mathbf{Y}, \mathbf{X})$ without knowledge of $\mathbf{F}$.

\section{Estimation and inference}\label{sec:estimationII}

We construct and analyze estimators that properly account for the incidental parameters $\mathbf{F}$ building on the identification result in Theorem \ref{thm:directed_II_sufficiency}.
First, we show how to form conditional likelihood contributions using the tetrad approach.
Second, we propose two distinct estimators (ETLE and PTLE) that aggregate information across outcome categories differently.
Finally, we establish consistency properties and discuss computation and inference.

Following Theorem \ref{thm:directed_II_sufficiency}, we construct estimators based on the conditional probabilities that are free of incidental parameters.
We define a tetrad $\sigma$ as informative under the cutoff $m$ when $\overline Z_\sigma(m) \in \{-1,1\}$.
Let $S_{m \sigma} = \mathbf{1}\{\overline Z_\sigma(m) \in \{-1,1\}\}$ indicate this event.
We define:
\begin{equation}
q^*_{mN} = \sum_{\sigma \in \Sigma} S_{m \sigma},
\end{equation}
which counts the total number of informative tetrads in the sample for cutoff $m$.
Here $q_N = |\Sigma|$ denotes the total number of possible tetrads.
The corresponding population proportion is:
\begin{equation}
	p_{m N} = \frac{E[q_{m N}^*]}{q_N}, \label{eq:expectation_proportion}
\end{equation}

For each tetrad-cutoff pair $(\sigma,m)$ that is informative ($S_{m\sigma}=1$), the conditional probability $P(\overline{Z}_\sigma(m)=1 | S_{m\sigma}=1, X_\sigma) = \Lambda(r_\sigma'\beta_0)$.
Let $y^*_{\sigma m} = \mathbf{1}\{\overline{Z}_\sigma(m)=1\}$ denote the binary outcome variable (0 or 1) corresponding to this conditional probability.
The conditional log-likelihood contribution is:
\begin{equation}
	l_{m \sigma}(\beta) = S_{m \sigma}
	\left[
	y^*_{\sigma m} \log \Lambda (r'_\sigma \beta)
	+
	(1 - y^*_{\sigma m}) \log (1 - \Lambda (r'_\sigma \beta))
	\right].
	\label{eq:CLL_contribution_sigma_m}
\end{equation}
This is the standard log-likelihood contribution for a binary logit model applied to the informative tetrad-cutoff pair.
The corresponding score and Hessian contributions are standard:
\begin{align}
	s_{m \sigma}(\beta) &= S_{m \sigma} \left(y^*_{\sigma m} - \Lambda(r'_\sigma \beta)\right) r_\sigma, \label{eq:scoreII} \\
	H_{m \sigma}(\beta) &= - S_{m \sigma} \Lambda (r'_\sigma \beta) (1-\Lambda (r'_\sigma \beta)) r_\sigma r'_\sigma.
\label{eq:HessianII}
\end{align}

For a given cutoff $m$, we can aggregate these contributions over all tetrads:
\begin{equation}
	L_{m N}(\beta) = \sum_{\sigma \in \Sigma} l_{m \sigma} (\beta).
\label{eq:sample_objective_function_m}
\end{equation}
This $L_{m N}(\beta)$ would be the objective function for a binary dyadic model at cutoff $m$, as in \textcite{jochmansSemiparametricAnalysisNetwork2018}.
Similarly, let $s_{mN}(\beta) = \sum_{\sigma \in \Sigma} s_{m\sigma}(\beta)$ and $H_{mN}(\beta) = \sum_{\sigma \in \Sigma} H_{m\sigma}(\beta)$ be the aggregated score and Hessian for cutoff $m$.

\subsection{Two estimation strategies}

We propose two estimators that differ in how they combine information across the different cutoffs $m=1, \dots, M$.
Both estimators maximize an objective function based on the conditional log-likelihood contributions $l_{m\sigma}(\beta)$ over the parameter space $B$.

The first estimator assigns equal weight to the average log-likelihood from each cutoff:
\begin{equation}
	\check\beta_N = \text{argmax}_{\beta \in B} \sum_{m=1}^M \frac{L_{m N}(\beta)}{q^{*}_{mN}}.
\label{eq:sample_objective_function_a}
\end{equation}
This approach effectively treats each cutoff's contribution equally after normalizing by the amount of information available for that cutoff ($q^*_{mN}$).
We refer to this estimator as the equally-weighted tetrad logit estimator (ETLE).
Note that this approach is analogous to analyzing the binary outcome derived from each cutoff $m$ separately using the method of \textcite{jochmansSemiparametricAnalysisNetwork2018}.
It then gives equal weight to the information obtained from each analysis.
We include ETLE as a natural benchmark reflecting this perspective.

The second estimator pools all informative tetrad-cutoff pairs into a single objective function:
\begin{equation}
	\widehat \beta_N = \text{argmax}_{\beta \in B} \sum_{m=1}^M L_{m N}(\beta) = \text{argmax}_{\beta \in B} \sum_{m=1}^M \sum_{\sigma \in \Sigma} l_{m \sigma}(\beta).
\label{eq:sample_objective_function}
\end{equation}
This estimator implicitly weights each cutoff $m$ by the number of informative tetrads $q_{mN}^*$ available for that cutoff.
Cutoffs with more informative tetrads contribute more terms to the sum and thus have a greater influence on the estimate.
We refer to this estimator as the pooled tetrad logit estimator (PTLE).

The fundamental difference lies in how they weight information across cutoffs: ETLE treats each cutoff equally, regardless of the number of informative tetrads, while PTLE allows cutoffs with more informative tetrads to have proportionally greater influence.
ETLE may seem natural in applied work, as it parallels the strategy of separately estimating binary choice models and then combining results.
However, this approach requires each cutoff to provide sufficient identifying information.
The PTLE, in contrast, allows cutoffs with more informative tetrads to have greater influence on the estimation.
By pooling information across all cutoffs before estimation, PTLE can leverage strengths in some categories to compensate for weaknesses in others.
As we will show, this distinction has important implications for consistency under sparse networks or when certain outcome categories are rare.

To analyze the large-sample properties of these estimators, we maintain the following assumptions, adapted from the literature on network formation models.

\begin{assumption}[Sampling]\label{a:sampling_II}
Assume nodes $i \in \mathbb{N}_N$, along with their associated characteristics and fixed effects $F_i$, are independently drawn from a common population distribution.
\end{assumption}
This assumption allows for dependence between dyadic outcomes $Y_{ij}$ that share nodes (through shared fixed effects $F_i, F_j$ or covariates).

\begin{assumption}[Parameter space]\label{a:parameter_II}
The true parameter vector $\beta_0$ is interior to a compact set $B \subset \mathbb{R}^{k}$.
\end{assumption}
This is a standard regularity condition, where $k$ is the dimension of $\beta_0$.

\begin{assumption}[Moments]\label{a:moments_II}
There exists a constant $C<\infty$ such that the second moment of the covariates is uniformly bounded: $E[\|X_{ij}\|^2] < C$ for all dyads $(i,j)$.
\end{assumption}
This ensures that variances and covariances involving covariates are well-behaved.
The consistency of the estimators depends on sufficient information being available from the informative tetrads, which relates to the network's sparsity structure at different cutoffs. We define two scenarios.

\begin{assumption}\label{a:identification_II_strong}
	For every cutoff $m \in \{1, \cdots, M\}$:
	\begin{enumerate}
		\item The expected number of informative tetrads grows sufficiently fast relative to $N$: $N p_{m N} \rightarrow \infty$ as $N \rightarrow \infty$.
		\item The Hessian matrix $\mathcal{H}_m = \lim_{N \rightarrow \infty} (q_N p_{m N})^{-1} E\left[ H_{mN}(\beta_0)\right]$ exists and has full rank $k$.
	\end{enumerate}
\end{assumption}
This assumption implies that each cutoff $m$ individually provides enough information to identify $\beta_0$ asymptotically. 
It corresponds to Assumption 4 in \textcite{jochmansSemiparametricAnalysisNetwork2018} applied to each $m$.

\begin{assumption}\label{a:identification_II_weak}
	Define the pooled probability $p_N = \sum_{m=1}^M p_{mN}$. Then:
	\begin{enumerate}
		\item The total expected number of informative tetrad-cutoff pairs grows sufficiently fast: $N p_N \rightarrow \infty$ as $N \rightarrow \infty$.
		\item The Hessian matrix pooled across all cutoffs, $\mathcal{H} = \lim_{N \rightarrow \infty} (q_N p_{N})^{-1} \sum_{m=1}^M E\left[H_{mN}(\beta_0)\right]$, exists and has full rank $k$.
	\end{enumerate}
\end{assumption}
This assumption only requires sufficient identifying information when pooling across all cutoffs. 
It is weaker than Assumption \ref{a:identification_II_strong}. 
It holds if at least one cutoff $m$ satisfies the conditions of Assumption \ref{a:identification_II_strong}.

The weaker identification condition is particularly relevant for ordered data, where some outcome categories might be rare (e.g., ``best friend'' in a friendship network). 
This sparsity can lead to few informative tetrads ($q^*_{mN}$ small, $p_{mN}$ small) for cutoffs associated with these rare categories, potentially violating Assumption \ref{a:identification_II_strong}. 
Assumption \ref{a:identification_II_weak} allows us to proceed if pooling with more common categories provides enough overall information.

\begin{thm}[Consistency]\label{thm:consistency_II}
	Suppose Assumptions \ref{a:likelihood_II}--\ref{a:moments_II} hold. Then:
	\begin{enumerate}
		\item Under Assumption \ref{a:identification_II_strong}, both the ETLE $\check\beta_N$ and the PTLE $\widehat \beta_N$ converge in probability to $\beta_0$ as $N \rightarrow \infty$.
		\item Under Assumption \ref{a:identification_II_weak}, the PTLE $\widehat \beta_N$ converges in probability to $\beta_0$ as $N \rightarrow \infty$.
	\end{enumerate}
\end{thm}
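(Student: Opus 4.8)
The plan is to treat both estimators as concave M-estimators and invoke a standard convexity argument, so that consistency reduces to two ingredients: (i) pointwise convergence in probability of the appropriately normalized objective functions to nonstochastic limits, and (ii) unique maximization of those limits at $\beta_0$. Concavity is immediate: by \eqref{eq:HessianII} each Hessian contribution $H_{m\sigma}(\beta)$ is negative semidefinite, so $L_{mN}(\beta)$, the ETLE criterion $\sum_m L_{mN}(\beta)/q^*_{mN}$, and the PTLE criterion $\sum_m L_{mN}(\beta)$ are all concave in $\beta$. For concave criteria, pointwise convergence on the interior of the compact set $B$ (Assumption \ref{a:parameter_II}), together with a well-separated limit maximizer, upgrades automatically to convergence of the argmax; this lets me avoid establishing uniform convergence directly. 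I would carry out the argument for the PTLE under the weaker Assumption \ref{a:identification_II_weak}; the ETLE and the strong-identification claims then follow by applying the same steps cutoff-by-cutoff under Assumption \ref{a:identification_II_strong}, noting that Assumption \ref{a:identification_II_strong} implies Assumption \ref{a:identification_II_weak}.

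Pointwise convergence is the main step and the key technical obstacle, because the tetrad summands are dependent (tetrads sharing nodes are correlated) and the informativeness probabilities $p_{mN}$ may vanish under sparsity. I would normalize the pooled criterion by $q_N p_N$ and show $(q_N p_N)^{-1}\sum_m L_{mN}(\beta) - E[(q_N p_N)^{-1}\sum_m L_{mN}(\beta)] \to_p 0$ via a variance bound. The crucial observation, obtained from Theorem \ref{thm:directed_II_sufficiency} read conditionally on $\mathbf F$ (the cancellation $\Delta\lambda((m,m,m,m))=0$ holds given the fixed effects, not merely on average), is that the score $s_{m\sigma}(\beta_0)$ in \eqref{eq:scoreII} has mean zero conditional on $(\mathbf X, \mathbf F)$. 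Since the $\epsilon_{ij}$ are independent across dyads (Assumption \ref{a:likelihood_II}) and nodes are i.i.d.\ (Assumption \ref{a:sampling_II}), two tetrad contributions built from disjoint dyads are conditionally independent given $(\mathbf X, \mathbf F)$; combined with the conditional-mean-zero property, this forces their covariance to vanish unless the two tetrads share a common dyad. Counting the $O(N^6)$ dyad-sharing pairs out of $q_N^2 \asymp N^8$ total, and bounding each covariance by Cauchy--Schwarz using the bounded moments of $X_{ij}$ (Assumption \ref{a:moments_II}), the normalized variance is of order $(N^2 p_N)^{-1}$, which tends to zero under $N p_N \to \infty$ (Assumption \ref{a:identification_II_weak}(1)). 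The same computation applied to the criterion itself rather than its gradient yields pointwise convergence to $\bar Q(\beta) := \lim E[(q_N p_N)^{-1}\sum_m L_{mN}(\beta)]$, with a bounded-moments argument giving existence of this limit.

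For unique maximization I would use Theorem \ref{thm:directed_II_sufficiency} to identify the limit as a (negative) cross-entropy: for each informative pair the conditional mean of $y^*_{\sigma m}$ is $\Lambda(r_\sigma'\beta_0)$, so $E[l_{m\sigma}(\beta)]$ is the expected log-likelihood of a logit with success probability $\Lambda(r_\sigma'\beta_0)$ evaluated at $\Lambda(r_\sigma'\beta)$. This is maximized, for every $\sigma$ and $m$, at any $\beta$ with $r_\sigma'\beta = r_\sigma'\beta_0$, so $\beta_0$ is always a maximizer of $\bar Q$. Uniqueness follows from the strict concavity of $\bar Q$ at $\beta_0$ guaranteed by the full-rank pooled Hessian $\mathcal H$ in Assumption \ref{a:identification_II_weak}(2): a mean-value expansion gives $\bar Q(\beta_0) - \bar Q(\beta) \ge \tfrac12 (\beta-\beta_0)'(-\mathcal H + o(1))(\beta - \beta_0)$, which is strictly positive for $\beta \neq \beta_0$, making $\beta_0$ well separated. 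The convexity lemma then delivers $\widehat\beta_N \to_p \beta_0$.

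Finally, for the ETLE and the strong-identification part I would repeat the variance bound cutoff-by-cutoff after normalizing $L_{mN}$ by $q_N p_{mN}$; here a per-cutoff law of large numbers for the count, $q^*_{mN}/(q_N p_{mN}) \to_p 1$, is what converts the data-dependent normalization $q^*_{mN}$ in \eqref{eq:sample_objective_function_a} into the deterministic $q_N p_{mN}$, and the per-cutoff Hessians $\mathcal H_m$ of Assumption \ref{a:identification_II_strong}(2) give unique maximization of each term. This is precisely where the two estimators diverge: because ETLE normalizes each cutoff separately, it needs every $N p_{mN} \to \infty$ and every $\mathcal H_m$ nonsingular, whereas PTLE pools before normalizing and therefore only needs the aggregate conditions in Assumption \ref{a:identification_II_weak}, which can hold even when some individual cutoffs are uninformative. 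I expect the dyadic variance bookkeeping under vanishing $p_{mN}$ — in particular verifying that dependence enters only through shared dyads and tracking the correct powers of $p_N$ — to be the part requiring the most care.
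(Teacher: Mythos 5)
Your overall architecture coincides with the paper's proof: concave M-estimation, pointwise convergence of the normalized criterion via a variance bound on dependent tetrad sums, a law of large numbers for the informative-tetrad counts to replace the random normalizers $q^*_{mN}$ and $\sum_m q^*_{mN}$ by $q_N p_{mN}$ and $q_N p_N$, and unique maximization of the limit from the full-rank Hessian assumptions. However, there is one concretely wrong step in your variance bookkeeping. The conditional-mean-zero argument that kills covariances between tetrads sharing no dyad applies only to the score evaluated at $\beta_0$, where $E[s_{m\sigma}(\beta_0)\mid \mathbf X,\mathbf F]=0$. When you transfer "the same computation" to the criterion $l_{m\sigma}(\beta)$ (or to the score at $\beta\neq\beta_0$), the conditional mean $E[l_{m\sigma}(\beta)\mid\mathbf X,\mathbf F]$ is a nondegenerate function of the covariates and fixed effects of the four nodes in $\sigma$. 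By the law of total covariance, two tetrads that share a \emph{node} but no dyad then have nonzero covariance through the term $\mathrm{Cov}\bigl(E[l_{m\sigma}\mid\mathbf X,\mathbf F],\,E[l_{m'\sigma'}\mid\mathbf X,\mathbf F]\bigr)$, since both conditional means load on the shared node's i.i.d.\ draw under Assumption \ref{a:sampling_II}. The correct count of correlated pairs for the criterion is therefore $O(N^7)$ (node-sharing), not $O(N^6)$ (dyad-sharing), and the normalized variance is $O\bigl((Np_N)^{-1}\bigr)$, not $O\bigl((N^2p_N)^{-1}\bigr)$. This is exactly the computation the paper performs, with $|D(\sigma)|\propto N^3$ defined by sharing at least one node, and it is the reason the identification assumptions are stated as $Np_N\to\infty$ rather than $N^2p_N\to\infty$.

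The error is not fatal to the theorem --- the weaker rate $O\bigl((Np_N)^{-1}\bigr)$ still vanishes under Assumptions \ref{a:identification_II_strong}(1) and \ref{a:identification_II_weak}(1), so pointwise convergence and hence consistency survive once you redo the count with node-sharing dependence. But as written, the claim that "dependence enters only through shared dyads" is false for the objective function, and a proof built on the $O(N^6)$ count would be incorrect. Reserve the dyad-level projection for where it is actually valid: the asymptotic distribution of the score at $\beta_0$, which is where the paper's inference section (the $v_{ij}$ and $\Upsilon_N$ construction) uses it. Two smaller points: your identification step should also note, as the paper implicitly does, that $E[\widetilde l^2_{m\sigma}]$ must be shown to be $O(p_{m\sigma})$ (the indicator $S_{m\sigma}$ is what delivers the factor $q_Np_{mN}$ rather than $q_N$ in the numerator of the variance bound); and the ETLE part additionally requires controlling the product of the two random factors in the $A_{N2}$-type remainder, which your sketch asserts but does not spell out.
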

\begin{proof}See Appendix \ref{sec:proofs}, page \pageref{proof:consistency_II}.\end{proof}

This result highlights the key difference between the estimators: PTLE achieves consistency under weaker identification requirements.

\subsection{Inference}

Constructing valid standard errors requires analyzing the asymptotic distribution of the PTLE estimator.
This analysis involves examining the sum of score contributions, $\sum_{m=1}^M s_{mN}(\beta_0)$, which exhibits dependencies due to tetrads sharing common nodes.
The techniques required are analogous to those developed for binary dyadic models, notably in \textcite{jochmansSemiparametricAnalysisNetwork2018}, typically involving U-statistic theory or projection arguments to handle the network dependence structure.

Define the pooled score and Hessian functions aggregated across all cutoffs and tetrads as:
\begin{align*}
    s_N(\beta) &= \sum_{m=1}^{M} \sum_{\sigma \in \Sigma} s_{m \sigma}(\beta), \\
    H_N(\beta) &= \sum_{m=1}^{M} \sum_{\sigma \in \Sigma} H_{m \sigma}(\beta).
\end{align*}

We need to account for the network dependence structure when constructing the variance estimator.
Let $\Sigma_{ij}$ denote the set of all tetrads $\sigma \in \Sigma$ that include the specific dyad $(i,j)$ among their four constituent dyads.
The dyad-level score sum is:
\[
v_{ij}(\beta) = \sum_{m=1}^{M} \sum_{\sigma \in \Sigma_{ij}} s_{m \sigma}(\beta).
\]

The matrix $\Upsilon_N(\beta)$, formed by summing the outer products of these dyad-level terms, is central to characterizing the asymptotic variance:
\[
\Upsilon_N(\beta) = \sum_{i=1}^N \sum_{j \ne i} v_{ij}(\beta) v_{ij}(\beta)'.
\]

We conjecture that under suitable regularity conditions, including a stronger moment assumption requiring bounded 6th moments of the covariates, the PTLE estimator would have an asymptotic normal distribution.
The asymptotic variance takes the sandwich form $\mathcal{H}^{-1} \Upsilon \mathcal{H}^{-1}$, where $\mathcal{H}$ is the limiting scaled expected pooled Hessian defined in Assumption~\ref{a:identification_II_weak}.(2), and $\Upsilon$ represents the limiting scaled variance component.

A consistent estimator for this asymptotic variance can be constructed using the empirical sandwich formula:
\begin{equation} \label{eq:avar_estimator}
\widehat{\Omega}_N = \left( H_N(\widehat{\beta}_N) \right)^{-1} \Upsilon_N(\widehat{\beta}_N) \left( H_N(\widehat{\beta}_N) \right)^{-1}.
\end{equation}

This variance estimator accounts for the network dependence structure by grouping score contributions by dyads rather than by tetrads, reflecting the fact that the fundamental source of dependence in the network is dyads sharing nodes.
The standard errors for each parameter estimate are then given by the square roots of the diagonal elements of $\widehat{\Omega}_N$.

The construction of this variance estimator parallels the approach developed for binary dyadic models but extends it to accommodate multiple cutoffs in the ordered outcome setting.
While a formal proof would follow similar techniques to those in \textcite{jochmansSemiparametricAnalysisNetwork2018}, our focus here is on providing a practical method for valid inference that accounts for the network dependency structure.
Our simulation study in Section \ref{sec:simulation} demonstrates that this variance estimator performs well in finite samples, even when the network is sparse or when certain outcome categories are rare.

\subsection{Computation}

The PTLE estimator $\widehat{\beta}_N$ can be computed efficiently using standard statistical software. The procedure involves constructing a dataset of informative tetrad-cutoff pairs and running a standard binary logit regression.

\begin{enumerate}
    \item Identify informative tetrads:
    \begin{enumerate}
        \item Generate all $q_N$ possible tetrads $\sigma = (i_1, i_2, j_1, j_2)$ of four distinct nodes.
        \item For each tetrad $\sigma$ and for each cutoff $m \in \{1, \dots, M\}$:
        \begin{itemize}
            \item Construct $D_{i_r,j_s}(m) = \mathbf{1}\{Y_{i_r,j_s} \ge m\}$ for all four dyads.
            \item Calculate the tetrad difference $\overline{Z}_{\sigma}(m) = \frac{1}{2}((D_{i_1,j_1}(m)-D_{i_1,j_2}(m))-(D_{i_2,j_1}(m)-D_{i_2,j_2}(m)))$.
            \item Determine if the pair $(\sigma, m)$ is informative: $S_{m\sigma} = \mathbf{1}\{\overline{Z}_{\sigma}(m) \in \{-1, +1\}\}$.
            \item If $S_{m\sigma}=1$, record the binary outcome $y^*_{\sigma m} = \mathbf{1}\{\overline{Z}_{\sigma}(m) = +1\}$.
        \end{itemize}
    \end{enumerate}

    \item Estimate pooled logit model:
    \begin{enumerate}
        \item Create a dataset containing one observation for each informative pair $(\sigma, m)$ (i.e., where $S_{m\sigma} = 1$). This dataset will have $\sum_{m=1}^M q^*_{mN}$ observations.
        \item For each observation $(\sigma, m)$ in this dataset:
        \begin{itemize}
            \item The dependent variable is the binary outcome $y^*_{\sigma m}$.
            \item The vector of regressors is $r_{\sigma} = (X_{i_1,j_1} - X_{i_1,j_2}) - (X_{i_2,j_1} - X_{i_2,j_2})$.
        \end{itemize}
        \item Estimate a standard binary logit model of $y^*_{\sigma m}$ on $r_{\sigma}$. In \texttt{R}:\\ \texttt{model <- glm(ystar $\sim$ r - 1, family = binomial(link = "logit")}.
        \item The estimated coefficient vector for $r_\sigma$ from this regression is the PTLE $\widehat{\beta}_N$.
    \end{enumerate}

    \item Compute standard errors:
    \begin{enumerate}
        \item Calculate the pooled Hessian matrix $H_N(\widehat{\beta}_N)$ from the logit estimation. 
        \item For each dyad $(i,j)$, identify all tetrads $\sigma \in \Sigma_{ij}$ that include this dyad. 
        \item For each dyad, compute the score sum \(v_{ij}(\widehat{\beta}_N) = \sum_{m=1}^{M} \sum_{\sigma \in \Sigma_{ij}} s_{m \sigma}(\widehat{\beta}_N),\) where $s_{m \sigma}(\widehat{\beta}_N) = S_{m \sigma} (y^*_{m \sigma} - \Lambda(r'_{\sigma} \widehat{\beta}_N)) r_{\sigma}$.
        \item Form the matrix $\Upsilon_N(\widehat{\beta}_N) = \sum_{i=1}^N \sum_{j \ne i} v_{ij}(\widehat{\beta}_N) v_{ij}(\widehat{\beta}_N)'$.
        \item Compute the variance estimator $\widehat{\Omega}_N = (H_N(\widehat{\beta}_N))^{-1} \Upsilon_N(\widehat{\beta}_N) (H_N(\widehat{\beta}_N))^{-1}$.
        \item The standard errors for each coefficient in $\widehat{\beta}_N$ are the square roots of the diagonal elements of $\widehat{\Omega}_N$.
    \end{enumerate}
\end{enumerate}

The ETLE estimator $\check{\beta}_N$ can be computed similarly using a weighted logit approach.
Modify step 2 by applying observation weights $w_{\sigma m} = 1/q_{mN}^*$ to each informative tetrad-cutoff pair $(\sigma, m)$ during the logit estimation. 
In \texttt{R}, we use the \texttt{weights} argument in \texttt{glm}.

\section{Simulation study}
\label{sec:simulation}

We use a series of numerical experiments to evaluate the performance of our proposed estimators.
Our design closely follows the setup described in \textcite{jochmansSemiparametricAnalysisNetwork2018}.
Throughout our simulation study, we consider the ordered outcome case with $M=2$, meaning outcomes can take values in $\{0,1,2\}$.
We have a single regressor defined as
\begin{equation*}
    X_{ij} = (W_i - W_j)^2,
\end{equation*}
where $W_i$ is a Bernoulli random variable with parameter $\frac{1}{2}$.
This dyad-level covariate depends on the characteristics $(W_i,W_j)$ of both nodes.
The node-specific thresholds are functions of the sample size and are constructed as
\begin{equation*}
    \lambda_{im} = \lambda_{m} + \lambda_{i} = \lambda_m + \frac{N-i}{N-1} C_N,
\end{equation*}
with $\lambda_{im} = \delta_{im}$ imposed for symmetry.
The term $\lambda_m$ represents the common threshold component, and $\lambda_i$ captures the node-specific heterogeneity in thresholds, with the parameter $C_N$ controlling this heterogeneity's magnitude.
We simulate networks with two sample sizes $N \in \{25, 50\}$ across different sparsity levels \[C_N \in \{0, \log(\log(N)), \log(N)^{1/2}, \log(N)\},\] with larger values creating increasingly sparse networks.
When $C_N = 0$, there is no node-specific heterogeneity in thresholds, resulting in a dense network.
As $C_N$ increases, the variation in thresholds across nodes increases, leading to sparser networks with fewer links qualifying for higher categories.

\subsection{Homogeneous thresholds}

In our first set of simulations, we focus on homogeneous threshold structures where all nodes share the same base threshold values $\lambda_m$.
With $M=2$, we have two thresholds $\lambda_1$ and $\lambda_2$ that determine the boundaries between the three outcome categories (0, 1, and 2).
Our initial setting uses the threshold values $\lambda_1 = 0$ and $\lambda_2 = 1$.
For each generated network, we implement our binarization approach using two cutoffs: $m_{\text{cutoff}} = 1$ and $m_{\text{cutoff}} = 2$.
These cutoffs create binary indicators based on whether the outcome meets or exceeds the specified threshold: $D_{ij}(m) = \mathbf{1}\{Y_{ij} \geq m\}$.

Table \ref{tab:degree_distribution} presents summary statistics of the degree distributions for the binarized networks across different values of $C_N$ and for both sample sizes $N$.
The degree distributions, representing the average number of connections per node for the binarized networks $D_{ij}(m)$, show that networks become sparser as $C_N$ increases.
This is reflected in the declining mean degrees for both cutoffs.

Table \ref{tab:simulation_results} presents the simulation results comparing our two proposed estimators, ETLE and PTLE, based on 1000 Monte Carlo replications.
We report the mean, median, standard deviation (std), and interquartile range (iqr) of the estimates of $\beta_0=1$ for each configuration.
Both estimators perform similarly in dense networks ($C_N = 0$), but their performance diverges as sparsity increases.
The performance gap between ETLE and PTLE widens in sparser networks, with PTLE consistently showing less bias.
The superior performance of PTLE is less pronounced for the larger sample size of $N=50$.
This is expected, as the larger network size generates a sufficient number of informative tetrads across thresholds to improve estimation for both models, even in sparser settings.
Both estimators exhibit increased bias when $C_N = \log(N)$, consistent with our observation that this setting represents an extremely sparse network design.

We also compare our estimators against standard binary choice models using tetrad-differencing with different single cutoffs.
The binary model with $m_{\text{cutoff}} = 1$ yields results similar to PTLE, benefiting from more informative tetrads at this threshold.
In contrast, the binary model with $m_{\text{cutoff}} = 2$ produces more biased estimates due to fewer informative tetrads.
This comparison shows that the performance of binary models depends critically on the chosen cutoff.
Moreover, selecting only one threshold, even the most informative one, is inefficient as it ignores information from the rest of the outcome distribution. This inefficiency will be much more pronounced under combinations of smaller $N$, more sparsity and a larger number of categories $M$.
The proposed ordered approach, on the other hand, is more principled, as it systematically leverages the full structure of the data.

To further examine the impact of threshold magnitudes, we conduct a second set of simulations with $\lambda_1 = 0$ and $\lambda_2 = 2$.
This wider threshold gap reduces the number of informative tetrads in the higher categories.
Table \ref{tab:degree_distribution_2} shows that for $m_{\text{cutoff}} = 2$, the mean degree is approximately half that observed with the narrower threshold range.
Specifically, comparing Tables \ref{tab:degree_distribution} and \ref{tab:degree_distribution_2} for $C_N = 0$, the mean degree drops from 0.385 to 0.194.

The corresponding estimation results in Table \ref{tab:simulation_results_2} reveal that especially in terms of standard deviations and interquartile range ETLE exhibits a more pronounced deterioration compared to PTLE. This is a direct consequence of ETLE's equal-weighting scheme.
It gives the sparser, less-informative higher category the same influence on the final estimate as the denser lower category, which degrades its performance. As a further results of this, the mean bias of ETLE deteriorates quite markedly under $N=25$ and $C_N=\log(N)$ when the network is small and sparse---with almost no change in PTLE.
The results also confirm that binary choice models become increasingly sensitive to cutoff selection as the threshold range widens.
Inappropriate cutoffs substantially increase both bias and variability, particularly in sparse networks.

To visualize the effects of network sparsity and threshold magnitudes on the mean degrees for different cutoff points, we conduct additional simulations for $N=25$ (1000 replications) varying the maximum threshold $\lambda_2$ while keeping $\lambda_1 = 0$ fixed.
Figure \ref{fig:mean_degree_all} illustrates the effect on the mean network degree of increasing $\lambda_2$ for two sparsity levels.
Panel (a) shows results for $C_N=0$ while panel (b) shows results for $C_N=\log(N)^{1/2}$. We note that for a given $C_N$ the mean degrees for the first cutoff remain relatively stable since $\lambda_1$ is fixed. Figure \ref{fig:mean_degree_all} conveys two messages: first, inducing sparsity leads to a uniform decrease on the mean degrees across both cutoffs. Second, within a given sparsity setting, increasing $\lambda_2$ further decreases the mean degree for the second cutoff. Indeed, for $C_N=0$ it is clear that the mean degree for $m_{\text{cutoff}}=2$ approaches zero---and becomes almost zero for $\lambda_2 > 3$ .

Figure \ref{fig:cmle_all} presents the mean estimates across replications for the ETLE and PTLE estimators, across different values of $\lambda_2$.
Consistent with the observation that mean degrees for $m_{\text{cutoff}}=2$ become very low for $\lambda_2 > 3$, we restrict our reporting to $\lambda_2 \leq 3$, as the low number of informative tetrads for large $\lambda_2$ leads to numerical instability and unreliable estimates.
We observe that in both cases the difference between ETLE and PTLE widens as the mean degree for the second cutoff gets close to zero, implying a progressively lower number of informative tetrads for this cutoff. For $C_N=0$ this effect kicks in as $\lambda_2$ becomes larger. For $C_N=\log(N^{1/2})$, on the other hand, the worsening of the ETLE estimator begins at even lower threshold values, becoming progressively worse. This is not surprising in light of panel (b) of Figure \ref{fig:cmle_diff_all}: in this scenario, the mean degree for the second cutoff is already almost equal to zero. Importantly no such worsening of estimator performance is observed for PTLE. As $\lambda_2$ increases, the mean estimates from PTLE remain consistently close to the true parameter value $\beta_0 = 1$.  ETLE, on the other hand, clearly requires a minimum mean degree across all threshold to be reliable.
This pattern provides strong visual confirmation of our main theoretical result.
PTLE's robustness stems from its pooling mechanism, which naturally gives more influence to categories with more information.
In contrast, ETLE's equal-weighting scheme makes it vulnerable to instability when any single category becomes sparse.

\begin{table}[ht!]
% The code that genearates this table is 20241021_Updated_network_sim.R and Updated_network_funs.R under "Simulation" folder.
% We use lower_lambda = 0,  upper_lambda = 1, and ulambda_diff = 0.
    \centering
    \begin{tabularx}{\textwidth}{l *{6}{>{\centering\arraybackslash}X}}
        \toprule
        \( C_N \) & Mean & $Q_{0.25}$ & $Q_{0.5}$ & $Q_{0.75}$ & Min. & Max. \\
        \midrule
        \multicolumn{7}{c}{\( N = 25,\; m_{\text{cutoff}} = 1 \)} \\
        0                       & 0.616 & 0.555 & 0.618 & 0.681 & 0.417 & 0.804 \\
        \(\log(\log(N))\)      & 0.354 & 0.273 & 0.350 & 0.430 & 0.140 & 0.596 \\
        \(\log(N^{1/2})\)      & 0.249 & 0.161 & 0.237 & 0.325 & 0.051 & 0.505 \\
        \(\log(N)\)            & 0.114 & 0.040 & 0.092 & 0.172 & 0.000 & 0.347 \\
        \multicolumn{7}{c}{\( N = 25,\; m_{\text{cutoff}} = 2 \)} \\
        0                       & 0.385 & 0.321 & 0.382 & 0.448 & 0.198 & 0.586 \\
        \(\log(\log(N))\)      & 0.180 & 0.118 & 0.172 & 0.235 & 0.032 & 0.377 \\
        \(\log(N^{1/2})\)      & 0.119 & 0.060 & 0.107 & 0.166 & 0.004 & 0.306 \\
        \(\log(N)\)            & 0.052 & 0.002 & 0.038 & 0.082 & 0.000 & 0.203 \\
        \midrule
        \multicolumn{7}{c}{\( N = 50,\; m_{\text{cutoff}} = 1 \)} \\
        0                       & 0.615 & 0.570 & 0.616 & 0.661 & 0.458 & 0.767 \\
        \(\log(\log(N))\)      & 0.317 & 0.244 & 0.312 & 0.386 & 0.122 & 0.545 \\
        \(\log(N^{1/2})\)      & 0.222 & 0.141 & 0.210 & 0.294 & 0.046 & 0.464 \\
        \(\log(N)\)            & 0.082 & 0.021 & 0.058 & 0.127 & 0.000 & 0.292 \\
        \multicolumn{7}{c}{\( N = 50,\; m_{\text{cutoff}} = 2 \)} \\
        0                       & 0.385 & 0.339 & 0.384 & 0.429 & 0.234 & 0.542 \\
        \(\log(\log(N))\)      & 0.157 & 0.106 & 0.150 & 0.202 & 0.031 & 0.332 \\
        \(\log(N^{1/2})\)      & 0.105 & 0.056 & 0.094 & 0.145 & 0.005 & 0.274 \\
        \(\log(N)\)            & 0.037 & 0.001 & 0.022 & 0.056 & 0.000 & 0.166 \\
        \bottomrule
    \end{tabularx}
    \caption{Degree distributions for simulated data: $\lambda_m = \{0,1\}$.}
    \label{tab:degree_distribution}
\end{table}
\FloatBarrier

\begin{table}[ht!]
% The code that genearates this table is 20241021_Updated_network_sim.R and Updated_network_funs.R under "Simulation" folder.
% We use lower_lambda = 0,  upper_lambda = 1, and ulambda_diff = 0.
    \centering
    \begin{tabularx}{\textwidth}{l *{8}{>{\centering\arraybackslash}X}}
        \toprule
        & \multicolumn{4}{c}{\( N = 25 \)} & \multicolumn{4}{c}{\( N = 50 \)} \\
        \cmidrule(lr){2-5} \cmidrule(lr){6-9}
        & ETLE & PTLE & Binary (1) & Binary (2) & ETLE & PTLE & Binary (1) & Binary (2) \\
        \midrule
        \multicolumn{9}{c}{\( C_N = 0 \)} \\
        mean    & 1.009 & 1.009 & 1.008 & 1.015 & 1.001 & 1.001 & 1.004 & 0.999 \\
        median  & 1.006 & 1.007 & 1.005 & 1.009 & 1.000 & 1.000 & 1.004 & 0.998 \\
        std     & 0.166 & 0.166 & 0.188 & 0.185 & 0.080 & 0.080 & 0.089 & 0.087 \\
        iqr     & 0.223 & 0.224 & 0.251 & 0.238 & 0.102 & 0.103 & 0.123 & 0.115 \\
        \midrule
        \multicolumn{9}{c}{\( C_N = \log(\log(N)) \)} \\
        mean    & 1.019 & 1.015 & 1.015 & 1.042 & 1.005 & 1.004 & 1.003 & 1.009 \\
        median  & 1.008 & 1.008 & 1.010 & 1.021 & 1.007 & 1.006 & 1.003 & 1.008 \\
        std     & 0.306 & 0.292 & 0.306 & 0.434 & 0.097 & 0.092 & 0.094 & 0.126 \\
        iqr     & 0.271 & 0.255 & 0.275 & 0.349 & 0.133 & 0.124 & 0.121 & 0.167 \\
        \midrule
        \multicolumn{9}{c}{\( C_N = \log(N^{1/2}) \)} \\
        mean    & 1.021 & 1.017 & 1.018 & 1.094 & 1.002 & 1.002 & 1.002 & 1.006 \\
        median  & 1.008 & 1.007 & 1.005 & 1.013 & 0.999 & 0.997 & 0.998 & 1.002 \\
        std     & 0.259 & 0.243 & 0.247 & 0.760 & 0.119 & 0.112 & 0.114 & 0.155 \\
        iqr     & 0.344 & 0.313 & 0.325 & 0.441 & 0.169 & 0.158 & 0.152 & 0.219 \\
        \midrule
        \multicolumn{9}{c}{\( C_N = \log(N) \)} \\
        mean    & 1.199 & 1.180 & 1.184 & 1.990 & 1.015 & 1.011 & 1.011 & 1.034 \\
        median  & 1.036 & 1.035 & 1.032 & 1.073 & 1.012 & 1.003 & 1.001 & 1.018 \\
        std     & 1.135 & 1.129 & 1.153 & 2.744 & 0.207 & 0.186 & 0.185 & 0.289 \\
        iqr     & 0.534 & 0.477 & 0.461 & 0.771 & 0.278 & 0.253 & 0.241 & 0.370 \\
        \bottomrule
    \end{tabularx}
    \caption{Simulation results: $\lambda_m = \{0,1\}$. Binary (1) and Binary (2) refer to standard binary choice models estimated using the tetrad-differencing technique, with cutoff thresholds $m_{\text{cutoff}}=1$ and $m_{\text{cutoff}}=2$, respectively.}
    \label{tab:simulation_results}
\end{table}
\FloatBarrier

\begin{table}[ht!]
% The code that genearates this table is 20241021_Updated_network_sim.R and Updated_network_funs.R under "Simulation" folder.
% We use lower_lambda = 0,  upper_lambda = 2, and ulambda_diff = 0.
    \centering
    \begin{tabularx}{\textwidth}{l *{6}{>{\centering\arraybackslash}X}}
        \toprule
        \( C_N \) & Mean & $Q_{0.25}$ & $Q_{0.5}$ & $Q_{0.75}$ & Min. & Max. \\
        \midrule
        \multicolumn{7}{c}{\( N = 25,\; m_{\text{cutoff}} = 1 \)} \\
        0                       & 0.616 & 0.555 & 0.618 & 0.681 & 0.417 & 0.804 \\
        \(\log(\log(N))\)      & 0.354 & 0.273 & 0.350 & 0.430 & 0.140 & 0.596 \\
        \(\log(N^{1/2})\)      & 0.249 & 0.161 & 0.237 & 0.325 & 0.051 & 0.505 \\
        \(\log(N)\)            & 0.114 & 0.040 & 0.092 & 0.172 & 0.000 & 0.347 \\
        \multicolumn{7}{c}{\( N = 25,\; m_{\text{cutoff}} = 2 \)} \\
        0                       & 0.194 & 0.141 & 0.190 & 0.243 & 0.053 & 0.363 \\
        \(\log(\log(N))\)      & 0.078 & 0.038 & 0.071 & 0.110 & 0.001 & 0.216 \\
        \(\log(N^{1/2})\)      & 0.050 & 0.008 & 0.041 & 0.075 & 0.000 & 0.171 \\
        \(\log(N)\)            & 0.022 & 0.000 & 0.002 & 0.038 & 0.000 & 0.113 \\
        \midrule
        \multicolumn{7}{c}{\( N = 50,\; m_{\text{cutoff}} = 1 \)} \\
        0                       & 0.616 & 0.570 & 0.617 & 0.662 & 0.459 & 0.769 \\
        \(\log(\log(N))\)      & 0.317 & 0.243 & 0.311 & 0.386 & 0.122 & 0.544 \\
        \(\log(N^{1/2})\)      & 0.223 & 0.143 & 0.211 & 0.295 & 0.046 & 0.465 \\
        \(\log(N)\)            & 0.082 & 0.021 & 0.058 & 0.127 & 0.000 & 0.292 \\
        \multicolumn{7}{c}{\( N = 50,\; m_{\text{cutoff}} = 2 \)} \\
        0                       & 0.194 & 0.156 & 0.192 & 0.230 & 0.079 & 0.329 \\
        \(\log(\log(N))\)      & 0.067 & 0.038 & 0.062 & 0.092 & 0.001 & 0.180 \\
        \(\log(N^{1/2})\)      & 0.044 & 0.019 & 0.038 & 0.064 & 0.000 & 0.149 \\
        \(\log(N)\)            & 0.015 & 0.000 & 0.002 & 0.022 & 0.000 & 0.090 \\
        \bottomrule
    \end{tabularx}
    \caption{Degree distributions for simulated data: $\lambda_m = \{0,2\}$.}
    \label{tab:degree_distribution_2}
\end{table}
\FloatBarrier

\begin{table}[ht!]
% The code that genearates this table is 20241021_Updated_network_sim.R and Updated_network_funs.R under "Simulation" folder.
% We use lower_lambda = 0,  upper_lambda = 1, and ulambda_diff = 0.
    \centering
    \begin{tabularx}{\textwidth}{l *{8}{>{\centering\arraybackslash}X}}
        \toprule
        & \multicolumn{4}{c}{\( N = 25 \)} & \multicolumn{4}{c}{\( N = 50 \)} \\
        \cmidrule(lr){2-5} \cmidrule(lr){6-9}
        & ETLE & PTLE & Binary (1) & Binary (2) & ETLE & PTLE & Binary (1) & Binary (2) \\
        \midrule
        \multicolumn{9}{c}{\( C_N = 0 \)} \\
        mean    & 1.008 & 1.006 & 1.008 & 1.024 & 1.001 & 1.002 & 1.003 & 1.002 \\
        median  & 1.007 & 1.002 & 1.005 & 1.020 & 1.001 & 1.003 & 1.008 & 1.000 \\
        std     & 0.178 & 0.172 & 0.188 & 0.255 & 0.082 & 0.080 & 0.089 & 0.109 \\
        iqr     & 0.235 & 0.231 & 0.251 & 0.324 & 0.113 & 0.110 & 0.115 & 0.144 \\
        \midrule
        \multicolumn{9}{c}{\( C_N = \log(\log(N)) \)} \\
        mean    & 1.038 & 1.017 & 1.015 & 1.324 & 1.003 & 1.000 & 1.000 & 1.014 \\
        median  & 1.028 & 1.012 & 1.010 & 1.063 & 1.004 & 1.000 & 1.001 & 1.010 \\
        std     & 0.333 & 0.304 & 0.306 & 1.494 & 0.118 & 0.095 & 0.096 & 0.192 \\
        iqr     & 0.318 & 0.261 & 0.275 & 0.549 & 0.157 & 0.129 & 0.135 & 0.260 \\
        \midrule
        \multicolumn{9}{c}{\( C_N = \log(N^{1/2}) \)} \\
        mean    & 1.027 & 1.015 & 1.018 & 1.804 & 1.002 & 0.998 & 0.998 & 1.021 \\
        median  & 1.016 & 1.005 & 1.005 & 1.019 & 1.004 & 0.996 & 0.993 & 1.007 \\
        std     & 0.310 & 0.244 & 0.247 & 2.530 & 0.150 & 0.111 & 0.112 & 0.257 \\
        iqr     & 0.421 & 0.326 & 0.325 & 0.670 & 0.199 & 0.152 & 0.155 & 0.313 \\
        \midrule
        \multicolumn{9}{c}{\( C_N = \log(N) \)} \\
        mean    & 1.237 & 1.176 & 1.184 & 4.423 & 1.017 & 1.009 & 1.010 & 1.341 \\
        median  & 1.119 & 1.037 & 1.032 & 1.386 & 1.018 & 0.997 & 0.995 & 1.028 \\
        std     & 1.148 & 1.125 & 1.153 & 4.682 & 0.258 & 0.185 & 0.187 & 1.627 \\
        iqr     & 0.696 & 0.458 & 0.461 & 8.657 & 0.358 & 0.244 & 0.241 & 0.618 \\
        \bottomrule
    \end{tabularx}
    \caption{Simulation results: $\lambda_m = \{0,2\}$. Binary (1) and Binary (2) refer to standard binary choice models estimated using the tetrad-differencing technique, with cutoff thresholds $m_{\text{cutoff}}=1$ and $m_{\text{cutoff}}=2$, respectively.}
    \label{tab:simulation_results_2}
\end{table}
\FloatBarrier

\begin{figure}[htbp]
% The code that genearates this graph is tau_graph.R and Updated_network_funs.R under "Simulation" folder.
\centering
  \begin{subfigure}[t]{0.48\textwidth}
    \includegraphics[width=\textwidth]{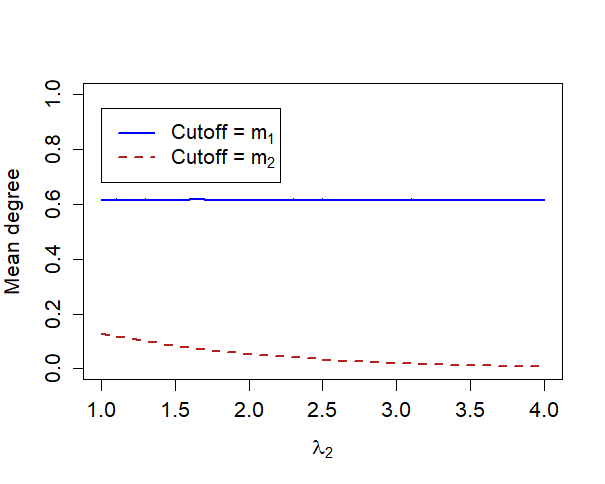}
    \caption{\(C_N = 0\)}
    \label{fig:mean_degree_0}
  \end{subfigure}
  \hfill
  \begin{subfigure}[t]{0.48\textwidth}
    \includegraphics[width=\textwidth]{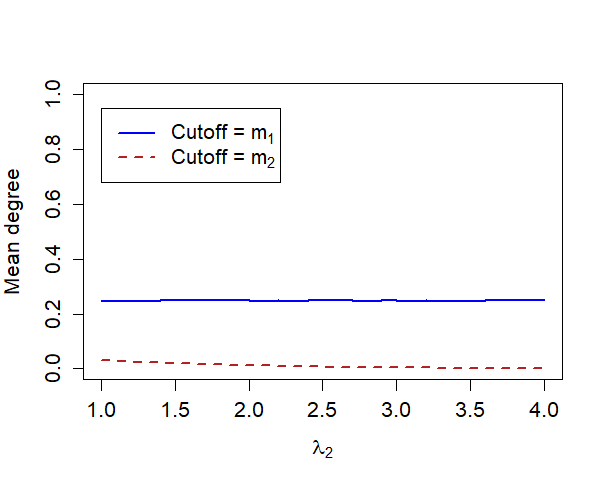}
    \caption{\(C_N = \log(N^{1/2})\)}
    \label{fig:mean_degree_logsqrtn}
  \end{subfigure}
  \caption{Mean degrees across \(\lambda_2\) by cutoff category for \(C_N = 0\) and \(C_N = \log(N^{1/2})\).}
  \label{fig:mean_degree_all}
\end{figure}
\FloatBarrier

\begin{figure}[htbp]
  \centering
  \begin{subfigure}[t]{0.48\textwidth}
    \includegraphics[width=\textwidth]{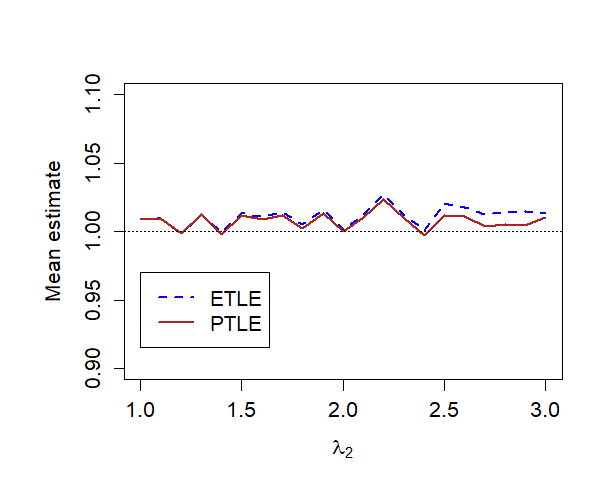}
    \caption{\(C_N = 0\)}
    \label{fig:cmle_0}
  \end{subfigure}
  \hfill
  \begin{subfigure}[t]{0.48\textwidth}
    \includegraphics[width=\textwidth]{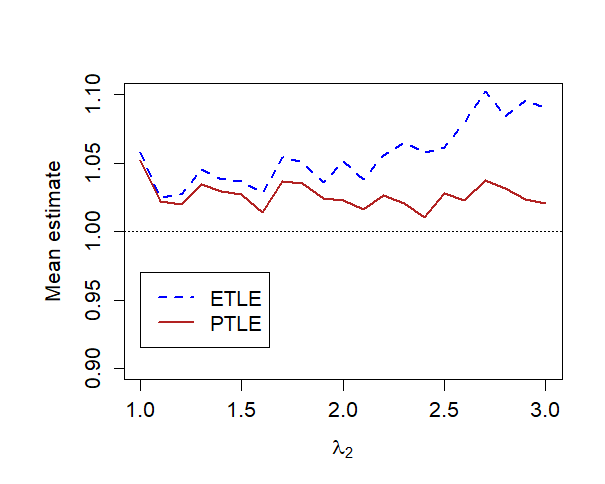}
    \caption{\(C_N = \log(N^{1/2})\)}
    \label{fig:cmle_logsqrt}
  \end{subfigure}
  \caption{Comparison of mean estimates across values of the maximum threshold, \(\lambda_2\), under \(C_N = 0\) and \(C_N = \log(N^{1/2})\) for two estimators.}
  \label{fig:cmle_all}
\end{figure}
\FloatBarrier

\subsection{Heterogeneous thresholds}

To capture heterogeneity in threshold structures across different types of nodes, we introduce variation in the threshold sets $\lambda_m$ based on the binary covariate $W_i$.
This setting reflects environments where individuals differ in their willingness or propensity to form connections—such as in a school context, where some students may form friendships easily while others require stronger commonalities.
We model this by creating two distinct node types, assigning narrower threshold ranges to individuals with $W_i = 0$ and wider ranges to those with $W_i = 1$.
Formally, we define the thresholds as:
\begin{equation}
    \lambda_{m} = \begin{cases} 
       \lambda_{m,0} & \text{if } W_i = 0 \\
       \lambda_{m,1} & \text{if } W_i = 1 
    \end{cases}
\end{equation}
where $\lambda_{m,0}$ and $\lambda_{m,1}$ represent the threshold values for nodes with $W_i = 0$ and $W_i = 1$, respectively.
We consider two configurations: (i) $\lambda_1 = 0$, $\lambda_{2,0} = 0.5$, $\lambda_{2,1} = 1$; and (ii) $\lambda_1 = 0$, $\lambda_{2,0} = 0.5$, $\lambda_{2,1} = 1.5$.
The corresponding degree distributions are presented in Table \ref{tab:degree_distribution_0.5_1} and Table \ref{tab:degree_distribution_0.5_1.5}, which show patterns consistent with the characteristics described in the previous subsection.

To implement these heterogeneous thresholds, we specify the components of the total threshold for a dyad $(i,j)$.
The sender's type-dependent component is $\lambda_{m,0}$ if $W_i = 0$ or $\lambda_{m,1}$ if $W_i = 1$.
The receiver's type-dependent component is determined similarly by $W_j$.
The total threshold for the dyad is the sum of these two type-dependent components and the individual-specific shifts, $\lambda_i + \lambda_j$.
For example, under configuration (i) with $M = 2$, if $W_i = 1$ and $W_j = 0$, the first threshold is $0 + 0 + \lambda_i + \lambda_j$ (since $\lambda_1 = 0$ for both types), while the second threshold is $1 + 0.5 + \lambda_i + \lambda_j$ (using $\lambda_{2,1} = 1$ for node $i$ and $\lambda_{2,0} = 0.5$ for node $j$).

Tables \ref{tab:simulation_results_0.5_1} and \ref{tab:simulation_results_0.5_1.5} present the simulation results under these heterogeneous threshold designs.
Across all scenarios, the PTLE estimator stands out as the best option, regardless of sample size or sparsity level. However, the difference between PTLE and ETLE becomes quite pronounced under sparsity ($C_N=\log(N)$), which is the main setting of interest. The results also confirm that, as expected, estimation performance deteriorates when the maximum threshold for $W_i = 1$ increases from 1.0 to 1.5, implying a sparser network structure for the second cutoff.
The same is also observed for the binary models, the performance of which is highly sensitive to the choice of $m_{\text{cutoff}}$, with poorly chosen thresholds yielding substantial estimation bias, especially in sparser networks.
These results demonstrate that as heterogeneity in connection propensities increases, the choice of a robust estimator becomes even more critical.
PTLE's performance in these settings highlights its suitability for empirical applications where such structural variation is likely to be present.

To systematically examine the role of threshold heterogeneity, we consider a simulation exercise where the maximum threshold $\lambda_{2,1}$ ranges from 1 to 3, while  keeping $\lambda_{2,0} = 0.5$ fixed.
As in the homogeneous threshold analysis, we restrict attention to cases where $\lambda_{2,1} \leq 3$ to ensure reliable inference.
Figure \ref{fig:cmle_diff_all} reports the mean estimates across replications for both ETLE and PTLE as $\lambda_{2,1}$ increases.
In both dense and moderately sparse networks, PTLE maintains estimates close to the true parameter value $\beta_0 = 1$, showing minimal deviation as $\lambda_{2,1}$ expands.
In contrast, ETLE estimates systematically drift away from the true value as threshold heterogeneity increases, with the divergence more pronounced under sparser conditions.
These results reinforce our main theoretical prediction.
Because its pooling approach allows more informative categories to have a greater influence on the final estimate, PTLE is better suited for the heterogeneous and sparse connection patterns common in applied work.

\begin{table}[ht!]
% The code that genearates this table is 20241021_Updated_network_sim.R and Updated_network_funs.R under "Simulation" folder.
% We use lower_lambda = 0,  upper_lambda = 0.5, and ulambda_diff = 0.5.
\centering
    \begin{tabularx}{\textwidth}{l *{6}{>{\centering\arraybackslash}X}}
        \toprule
        \( C_N \) & Mean & $Q_{0.25}$ & $Q_{0.5}$ & $Q_{0.75}$ & Min. & Max. \\
        \midrule
        \multicolumn{7}{c}{\( N = 25,\; m_{\text{cutoff}} = 1 \)} \\
        0                      & 0.616 & 0.555 & 0.618 & 0.681 & 0.417 & 0.804 \\
        \(\log(\log(N))\)      & 0.355 & 0.274 & 0.351 & 0.430 & 0.138 & 0.596 \\
        \(\log(N^{1/2})\)      & 0.248 & 0.162 & 0.237 & 0.324 & 0.052 & 0.507 \\
        \(\log(N)\)            & 0.115 & 0.040 & 0.093 & 0.173 & 0.000 & 0.350 \\
        \multicolumn{7}{c}{\( N = 25,\; m_{\text{cutoff}} = 2 \)} \\
        0                      & 0.391 & 0.319 & 0.387 & 0.458 & 0.194 & 0.611 \\
        \(\log(\log(N))\)      & 0.185 & 0.119 & 0.176 & 0.240 & 0.033 & 0.390 \\
        \(\log(N^{1/2})\)      & 0.122 & 0.061 & 0.110 & 0.170 & 0.004 & 0.318 \\
        \(\log(N)\)            & 0.053 & 0.003 & 0.039 & 0.081 & 0.000 & 0.209 \\
        \midrule
        \multicolumn{7}{c}{\( N = 50,\; m_{\text{cutoff}} = 1 \)} \\
        0                      & 0.616 & 0.570 & 0.617 & 0.662 & 0.459 & 0.769 \\
        \(\log(\log(N))\)      & 0.317 & 0.243 & 0.311 & 0.386 & 0.122 & 0.544 \\
        \(\log(N^{1/2})\)      & 0.223 & 0.143 & 0.211 & 0.295 & 0.046 & 0.465 \\
        \(\log(N)\)            & 0.082 & 0.021 & 0.058 & 0.127 & 0.000 & 0.292 \\
        \multicolumn{7}{c}{\( N = 50,\; m_{\text{cutoff}} = 2 \)} \\
        0                      & 0.390 & 0.332 & 0.387 & 0.446 & 0.221 & 0.580 \\
        \(\log(\log(N))\)      & 0.161 & 0.107 & 0.152 & 0.207 & 0.031 & 0.352 \\
        \(\log(N^{1/2})\)      & 0.108 & 0.057 & 0.097 & 0.148 & 0.005 & 0.290 \\
        \(\log(N)\)            & 0.038 & 0.001 & 0.022 & 0.058 & 0.000 & 0.174 \\
        \bottomrule
    \end{tabularx}
    \caption{Degree distributions for simulated data: $\lambda_1 = 0$, $\lambda_{2,0} = 0.5$ and $\lambda_{2,1} = 1$.}
    \label{tab:degree_distribution_0.5_1}
\end{table}
\FloatBarrier

\begin{table}[ht!]
% The code that genearates this table is 20241021_Updated_network_sim.R and Updated_network_funs.R under "Simulation" folder.
% We use lower_lambda = 0,  upper_lambda = 0.5, and ulambda_diff = 0.5.
\centering
    \begin{tabularx}{\textwidth}{l *{8}{>{\centering\arraybackslash}X}}
        \toprule
        & \multicolumn{4}{c}{\( N = 25 \)} & \multicolumn{4}{c}{\( N = 50 \)} \\
        \cmidrule(lr){2-5} \cmidrule(lr){6-9}
        & ETLE & PTLE & Binary (1) & Binary (2) & ETLE & PTLE & Binary (1) & Binary (2) \\
        \midrule
        \multicolumn{9}{c}{\( C_N = 0 \)} \\
        mean    & 1.008 & 1.008 & 1.008 & 1.015 & 1.003 & 1.003 & 1.004 & 1.003 \\
        median  & 1.002 & 1.001 & 1.005 & 1.006 & 1.000 & 1.001 & 1.004 & 1.001 \\
        std     & 0.168 & 0.168 & 0.188 & 0.194 & 0.080 & 0.080 & 0.089 & 0.090 \\
        iqr     & 0.221 & 0.220 & 0.251 & 0.249 & 0.108 & 0.107 & 0.123 & 0.121 \\
        \midrule
        \multicolumn{9}{c}{\( C_N = \log(\log(N)) \)} \\
        mean    & 1.016 & 1.014 & 1.015 & 1.052 & 1.004 & 1.003 & 1.003 & 1.008 \\
        median  & 1.010 & 1.008 & 1.010 & 1.022 & 1.006 & 1.006 & 1.003 & 1.008 \\
        std     & 0.305 & 0.303 & 0.306 & 0.574 & 0.098 & 0.092 & 0.094 & 0.131 \\
        iqr     & 0.268 & 0.260 & 0.275 & 0.329 & 0.133 & 0.123 & 0.121 & 0.175 \\
        \midrule
        \multicolumn{9}{c}{\( C_N = \log(N^{1/2}) \)} \\
        mean    & 1.020 & 1.016 & 1.018 & 1.180 & 1.001 & 1.001 & 1.002 & 1.005 \\
        median  & 0.996 & 1.001 & 1.005 & 1.000 & 1.997 & 0.998 & 0.998 & 0.995 \\
        std     & 0.265 & 0.245 & 0.247 & 1.154 & 0.120 & 0.112 & 0.114 & 0.165 \\
        iqr     & 0.335 & 0.310 & 0.325 & 0.456 & 0.166 & 0.155 & 0.152 & 0.219 \\
        \midrule
        \multicolumn{9}{c}{\( C_N = \log(N) \)} \\
        mean    & 1.196 & 1.175 & 1.184 & 2.280 & 1.014 & 1.011 & 1.011 & 1.073 \\
        median  & 1.052 & 1.030 & 1.032 & 1.084 & 1.004 & 1.001 & 1.001 & 1.001 \\
        std     & 1.127 & 1.107 & 1.153 & 3.055 & 0.212 & 0.188 & 0.185 & 0.662 \\
        iqr     & 0.542 & 0.473 & 0.461 & 0.874 & 0.273 & 0.245 & 0.241 & 0.383 \\
        \bottomrule
    \end{tabularx}
    \caption{Simulation results: $\lambda_1 = 0$, $\lambda_{2,0} = 0.5$ and $\lambda_{2,1} = 1$. Binary (1) and Binary (2) refer to standard binary choice models estimated using the tetrad-differencing technique, with cutoff thresholds $m_{\text{cutoff}}=1$ and $m_{\text{cutoff}}=2$, respectively.}
    \label{tab:simulation_results_0.5_1}
\end{table}
\FloatBarrier

\begin{table}[ht!]
% The code that genearates this table is 20241021_Updated_network_sim.R and Updated_network_funs.R under "Simulation" folder.
% We use lower_lambda = 0,  upper_lambda = 0.5, and ulambda_diff = 1.
    \centering
    \begin{tabularx}{\textwidth}{l *{6}{>{\centering\arraybackslash}X}}
        \toprule
        \( C_N \) & Mean & $Q_{0.25}$ & $Q_{0.5}$ & $Q_{0.75}$ & Min. & Max. \\
        \midrule
        \multicolumn{7}{c}{\( N = 25,\; m_{\text{cutoff}} = 1 \)} \\
        0                      & 0.616 & 0.555 & 0.618 & 0.681 & 0.417 & 0.804 \\
        \(\log(\log(N))\)      & 0.355 & 0.274 & 0.351 & 0.430 & 0.140 & 0.596 \\
        \(\log(N^{1/2})\)      & 0.249 & 0.161 & 0.237 & 0.325 & 0.051 & 0.505 \\
        \(\log(N)\)            & 0.114 & 0.040 & 0.093 & 0.172 & 0.000 & 0.347 \\
        \multicolumn{7}{c}{\( N = 25,\; m_{\text{cutoff}} = 2 \)} \\
        0                      & 0.303 & 0.223 & 0.293 & 0.377 & 0.109 & 0.542 \\
        \(\log(\log(N))\)      & 0.134 & 0.075 & 0.123 & 0.180 & 0.010 & 0.331 \\
        \(\log(N^{1/2})\)      & 0.087 & 0.037 & 0.074 & 0.125 & 0.000 & 0.264 \\
        \(\log(N)\)            & 0.038 & 0.000 & 0.022 & 0.058 & 0.000 & 0.172 \\
        \midrule
        \multicolumn{7}{c}{\( N = 50,\; m_{\text{cutoff}} = 1 \)} \\
        0                      & 0.616 & 0.570 & 0.617 & 0.662 & 0.459 & 0.769 \\
        \(\log(\log(N))\)      & 0.317 & 0.243 & 0.311 & 0.386 & 0.122 & 0.544 \\
        \(\log(N^{1/2})\)      & 0.223 & 0.143 & 0.211 & 0.295 & 0.046 & 0.465 \\
        \(\log(N)\)            & 0.082 & 0.021 & 0.058 & 0.127 & 0.000 & 0.292 \\
        \multicolumn{7}{c}{\( N = 50,\; m_{\text{cutoff}} = 2 \)} \\
        0                      & 0.303 & 0.228 & 0.293 & 0.375 & 0.126 & 0.516 \\
        \(\log(\log(N))\)      & 0.116 & 0.068 & 0.106 & 0.154 & 0.010 & 0.297 \\
        \(\log(N^{1/2})\)      & 0.077 & 0.035 & 0.065 & 0.107 & 0.000 & 0.242 \\
        \(\log(N)\)            & 0.027 & 0.000 & 0.017 & 0.040 & 0.000 & 0.142 \\
        \bottomrule
    \end{tabularx}
    \caption{Degree distributions for simulated data: $\lambda_1 = 0$, $\lambda_{2,0} = 0.5$ and $\lambda_{2,1} = 1.5$.}
    \label{tab:degree_distribution_0.5_1.5}
\end{table}
\FloatBarrier

\begin{table}[ht!]
% The code that genearates this table is 20241021_Updated_network_sim.R and Updated_network_funs.R under "Simulation" folder.
% We use lower_lambda = 0,  upper_lambda = 0.5, and ulambda_diff = 1.
    \centering
    \begin{tabularx}{\textwidth}{l *{8}{>{\centering\arraybackslash}X}}
        \toprule
        & \multicolumn{4}{c}{\( N = 25 \)} & \multicolumn{4}{c}{\( N = 50 \)} \\
        \cmidrule(lr){2-5} \cmidrule(lr){6-9}
        & ETLE & PTLE & Binary (1) & Binary (2) & ETLE & PTLE & Binary (1) & Binary (2) \\
        \midrule
        \multicolumn{9}{c}{\( C_N = 0 \)} \\
        mean    & 1.010 & 1.008 & 1.008 & 1.037 & 1.004 & 1.004 & 1.004 & 1.007 \\
        median  & 1.008 & 1.004 & 1.005 & 1.019 & 1.002 & 1.004 & 1.004 & 1.004 \\
        std     & 0.178 & 0.175 & 0.188 & 0.362 & 0.084 & 0.083 & 0.089 & 0.111 \\
        iqr     & 0.228 & 0.230 & 0.251 & 0.313 & 0.114 & 0.113 & 0.123 & 0.153 \\
        \midrule
        \multicolumn{9}{c}{\( C_N = \log(\log(N)) \)} \\
        mean    & 1.023 & 1.015 & 1.015 & 1.376 & 1.007 & 1.004 & 1.003 & 1.023 \\
        median  & 1.009 & 1.007 & 1.010 & 1.037 & 1.000 & 1.004 & 1.003 & 1.006 \\
        std     & 0.334 & 0.306 & 0.306 & 1.677 & 0.110 & 0.094 & 0.094 & 0.187 \\
        iqr     & 0.319 & 0.267 & 0.275 & 0.486 & 0.149 & 0.125 & 0.121 & 0.238 \\
        \midrule
        \multicolumn{9}{c}{\( C_N = \log(N^{1/2}) \)} \\
        mean    & 1.023 & 1.016 & 1.018 & 1.957 & 1.004 & 1.002 & 1.002 & 1.038 \\
        median  & 1.013 & 1.002 & 1.005 & 1.028 & 1.004 & 0.998 & 0.998 & 1.006 \\
        std     & 0.292 & 0.246 & 0.247 & 2.685 & 0.135 & 0.113 & 0.114 & 0.419 \\
        iqr     & 0.363 & 0.307 & 0.325 & 0.672 & 0.181 & 0.155 & 0.153 & 0.307 \\
        \midrule
        \multicolumn{9}{c}{\( C_N = \log(N) \)} \\
        mean    & 1.217 & 1.181 & 1.184 & 4.163 & 1.019 & 1.011 & 1.011 & 1.608 \\
        median  & 1.061 & 1.027 & 1.032 & 1.301 & 1.007 & 0.999 & 1.001 & 1.039 \\
        std     & 1.157 & 1.139 & 1.153 & 4.235 & 0.236 & 0.186 & 0.185 & 0.214 \\
        iqr     & 0.629 & 0.464 & 0.461 & 8.604 & 0.305 & 0.241 & 0.241 & 0.561 \\
        \bottomrule
    \end{tabularx}
    \caption{Simulation results: $\lambda_1 = 0$, $\lambda_{2,0} = 0.5$ and $\lambda_{2,1} = 1.5$. Binary (1) and Binary (2) refer to standard binary choice models estimated using the tetrad-differencing technique, with cutoff thresholds $m_{\text{cutoff}}=1$ and $m_{\text{cutoff}}=2$, respectively.}
    \label{tab:simulation_results_0.5_1.5}
\end{table}
\FloatBarrier

\begin{figure}[htbp]
% The code that genearates this graph is tau_diff_graph.R and Updated_network_funs.R under "Simulation" folder.
\centering
  \begin{subfigure}[t]{0.48\textwidth}
    \includegraphics[width=\textwidth]{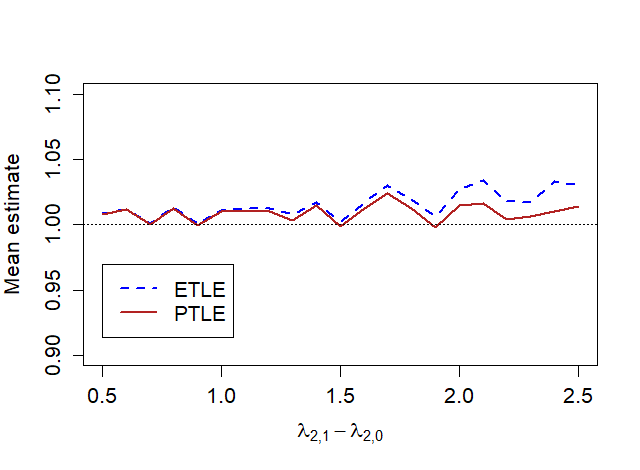}
    \caption{\(C_N = 0\)}
    \label{fig:cmle_0_heterogeneous}
  \end{subfigure}
  \hfill
  \begin{subfigure}[t]{0.48\textwidth}
    \includegraphics[width=\textwidth]{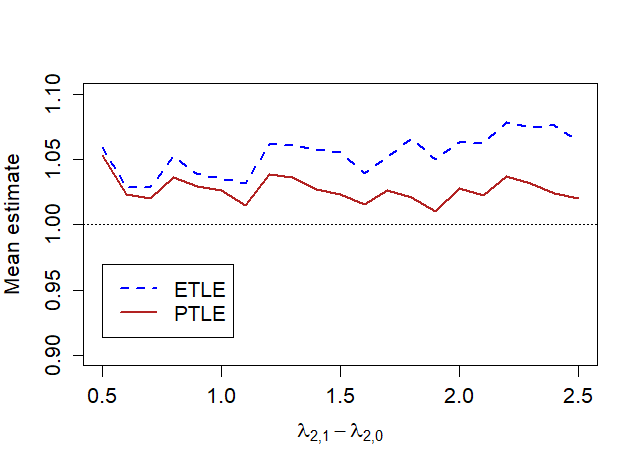}
    \caption{\(C_N = \log(N^{1/2})\)}
    \label{fig:cmle_logsqrt_heterogeneous}
  \end{subfigure}
  \caption{Mean PTLE and ETLE estimates by maximum threshold $\lambda_{2,1}$.}
  \label{fig:cmle_diff_all}
\end{figure}
\FloatBarrier

\subsection{Standard errors}

We compare the robust standard errors (RSE) calculated by our proposed sandwich variance estimator to the naive standard errors computed from the \texttt{glm()} function in R, which assumes independent observations and does not account for dyadic dependence in the network.
Table \ref{tab:simulation_results_se} presents the simulation results evaluating standard error performance for $N = 50$.
For each design, we report the ratio of the average estimated standard error to the Monte Carlo standard deviation, as well as empirical coverage rates at the 90\% and 95\% levels.

The RSE consistently achieves ratios close to 1 across all sparsity levels and threshold configurations, indicating accurate standard error estimation.
In contrast, the naive standard errors substantially underestimate variability, which leads to over-rejection of null hypotheses and excessive false positive rates.

The coverage rates for RSE remain stable and close to the nominal levels of 0.90 and 0.95 across all configurations.
The RSE correctly accounts for the sparsity and dependence structure inherent in the data.
In contrast, the naive standard errors yield severely undercovered intervals in every scenario.
In denser networks there are more informative tetrads and consequently more accumulated dependence in the conditional likelihood function. In sparse networks, on the other hand, there will be fewer informative tetrads and less accumulated dependence. Consequently, the naive standard errors perform even worse in denser network settings such as $C_N = 0$:
the actual coverage rates are only 0.090 and 0.107 for the 90\% and 95\% intervals, respectively.
These results confirm that naive standard errors systematically understate uncertainty and lead to overly narrow confidence intervals, regardless of sample size or network density.

Comparing across threshold configurations, RSE delivers reliable standard errors across all designs.
The naive standard errors remain substantially biased in both settings.
These findings demonstrate that researchers working with network models must account for the inherent dependence structure when conducting inference.
Standard statistical software that assumes independent observations will systematically understate uncertainty, leading to invalid conclusions.
Our robust variance estimator provides a practical solution that maintains reliable inference properties even under network sparsity and outcome category heterogeneity, making it essential for valid statistical inference in ordered network models.

\begin{table}[ht!]
% The code that generates this table is 20250516_se_sim.R and Updated_network_funs.R under "Simulation" folder.
\centering
\begin{tabularx}{\textwidth}{l *{4}{>{\centering\arraybackslash}X}}
    \toprule
    & \multicolumn{2}{c}{\( \lambda = \{0,2\} \)} & \multicolumn{2}{c}{\( \lambda = \{0,0.5\} \text{ and } \lambda = \{0,1.5\} \)} \\
    \cmidrule(lr){2-3} \cmidrule(lr){4-5}
    & NSE & RSE & NSE & RSE \\
    \midrule
    \multicolumn{5}{c}{\( C_N = 0 \)} \\
    se/std    & 0.067  & 1.050 & 0.062 & 1.034 \\
    90\% coverage     & 0.090 & 0.915 & 0.095 & 0.923 \\
    95\% coverage     & 0.107 & 0.964 & 0.107 & 0.963 \\
    \midrule
    \multicolumn{5}{c}{\( C_N = \log(\log(N)) \)} \\
    se/std    & 0.076 & 1.064 & 0.073 & 1.078 \\
    90\% coverage     & 0.099  & 0.918 & 0.110 & 0.927 \\
    95\% coverage     & 0.115 & 0.961 & 0.123 & 0.965 \\
    \midrule
    \multicolumn{5}{c}{\( C_N = \log(N^{1/2}) \)} \\
    se/std    & 0.086  & 1.053 & 0.083 & 1.041 \\
    90\% coverage     & 0.108 & 0.920 & 0.119 & 0.917 \\
    95\% coverage     & 0.125 & 0.960 & 0.137 & 0.956 \\
    \midrule
    \multicolumn{5}{c}{\( C_N = \log(N) \)} \\
    se/std    & 0.473  & 1.090  & 0.129 & 1.087 \\
    90\% coverage     & 0.171 & 0.925 & 0.200 & 0.935 \\
    95\% coverage     & 0.212 & 0.960 & 0.239 & 0.969 \\
    \bottomrule
\end{tabularx}
\caption{Performance of Naive and Robust Standard Errors ($N=50$)}
\label{tab:simulation_results_se}
\vspace{0.5em}
\footnotesize{\textit{Notes:} The table reports simulation results for two threshold configurations. We report the ratio of the average estimated standard error to the Monte Carlo standard deviation of the estimator (se/std) and empirical coverage rates for 90\% and 95\% confidence intervals. NSE refers to naive standard errors from a standard logit estimator assuming independent observations. RSE refers to the robust sandwich standard errors calculated using the method proposed in Section 4, which accounts for network dependence.}
\end{table}
\FloatBarrier

\section{Empirical illustration}
\label{sec:empirical-application}

\subsection{Data}

We analyze friendship formation among students using longitudinal relational measurements and nodal characteristics from a dataset of Dutch college students collected by \textcite{vanduijnFrameworkComparisonMaximum2009}.
The data track the same group of university freshmen over seven time points.
At the time of the first measurement, most students were strangers.
The first four waves of data were collected every three weeks, and the last three waves were collected every six weeks.
The dataset consists of 32 students.
Each student rated their relationship with others on a six-point ordinal scale (Table \ref{tab:relationship_categories}).
We focus on the fifth time point, when students had experienced 15 weeks of social interaction.

At this stage, most reported relationships fall into the categories of ``unknown person'' or ``neutral relationship'' (Table \ref{tab:category_distribution}).
We treat observations recorded as ``unknown person'' as missing data.
The ordinal position of this category is ambiguous, which makes its inclusion in the estimation problematic.
The corresponding degree distributions are summarized in Table \ref{tab:degree_distribution_data}.
Fewer links are observed when higher relationship levels are used as cutoff categories.
This sparsity poses a particular challenge for the ETLE, which can become unstable when the number of informative tetrads is small.

\begin{table}[ht]
\centering
\begin{tabularx}{\textwidth}{c l X}
\toprule
\textbf{Category} & \textbf{Label} & \textbf{Description of the response categories} \\
\midrule
0 & Troubled relationship & Persons with whom you cannot get on very well, and with whom you definitely do not want to start a relationship. There is a certain risk of getting into a conflict. \\[0.5em]
1 & Unknown person & Persons whom you do not know. \\[0.5em]
2 & Neutral relationship & Persons with whom you have not much in common. In case of an accidental meeting the contact is good. The chance of it growing into a friendship is not large. \\[0.5em]
3 & Friendly relationship & Persons with whom you regularly have pleasant contact during classes. The contact could grow into a friendship. \\[0.5em]
4 & Friendship & Persons with whom you have a good relationship, but whom you do not (yet) consider a `real' friend. \\[0.5em]
5 & Best friendship & Persons whom you would call your ```real'' friends. \\
\bottomrule
\end{tabularx}
\vspace{0.5em}
\caption{Description of relationship categories.}
\label{tab:relationship_categories}
\end{table}
\FloatBarrier

\begin{table}[ht]
\centering
\begin{tabular}{ccc}
\toprule
\textbf{Category} & \textbf{Frequency} & \textbf{Proportion} \\
\midrule
0 & 12  & 0.024 \\
 2 & 311 & 0.624 \\
 3 & 113 & 0.227 \\
 4 &  53 & 0.106 \\
 5 &   9 & 0.018 \\
\bottomrule
\end{tabular}
\caption{Category frequencies and proportions.}
\label{tab:category_distribution}
\end{table}
\FloatBarrier

\begin{table}[!htbp]
\centering
\begin{tabularx}{\textwidth}{l *{6}{>{\centering\arraybackslash}X}}
\hline
$m$ & $E \left[ 1\{Y_{ij} \geq m \} \right]$ & $Q_{0.25}$ & $Q_{0.5}$ & $Q_{0.75}$ & Min. & Max. \\
\hline
2 & 0.973 & 1.000 & 1.000 & 1.000 & 0.737 & 1.000 \\
3 & 0.375 & 0.223 & 0.344 & 0.492 & 0.000 & 1.000 \\
4 & 0.139 & 0.036 & 0.111 & 0.211 & 0.000 & 0.600 \\
5 & 0.020 & 0.000 & 0.000 & 0.000 & 0.000 & 0.200 \\
\hline
\end{tabularx}
\caption{Degree distribution of the friendship network by cutoff level.}
\label{tab:degree_distribution_data}
\vspace{0.5em}
\footnotesize{\textit{Notes:} The table summarizes the network's degree distribution when a link is defined as $Y_{ij} \geq m$. The ``Mean'' column reports the sample proportion of dyads meeting this criterion. $Q_{0.25}$, $Q_{0.5}$, and $Q_{0.75}$ are the 25th, 50th, and 75th percentiles of the node-level degree distribution. ``Min.'' and ``Max.'' are the minimum and maximum observed degree shares. The table shows that the network becomes increasingly sparse as the cutoff $m$ for defining a friendship tie increases.}
\end{table}
\FloatBarrier

The dataset contains individual characteristics including gender, smoking behavior, and education program.
We construct three dyad-level regressors: \textit{common gender}, \textit{both smoker}, and \textit{common program}.
The variable \textit{common gender} equals one when students $i$ and $j$ share the same gender and zero otherwise.
Similarly, \textit{both smokers} indicates whether both individuals in the dyad are smokers.
The variable \textit{common program} captures whether the two students are enrolled in the same education program.
All three variables are binary and symmetric in $(i,j)$.
Table \ref{tab:summary_stat_data} presents descriptive statistics.
Most pairs share the same gender (63.7\%).
Approximately 44\% of dyads are enrolled in the same education program, while only 11.6\% consist of two smokers.

\begin{table}[!htbp]
\centering
\begin{tabularx}{\textwidth}{l >{\centering\arraybackslash}X >{\centering\arraybackslash}X}
\hline
 & \text{Mean} & \text{Standard deviation} \\
\hline
Common gender  & 0.637 & 0.481 \\
Both smokers  & 0.116 & 0.321 \\
Common program & 0.440 & 0.497 \\
\hline
\end{tabularx}
\caption{Descriptive statistics for friendship data.}
\label{tab:summary_stat_data}
\end{table}
\FloatBarrier

\subsection{Results}
Table \ref{tab:friendship_estimates_5} presents estimation results for friendship outcomes at the fifth time point using various modeling approaches.
The ordered logit model without fixed effects shows a positive and significant coefficient for \textit{both smokers}, \textit{common gender}, and \textit{common program}, consistent with positive homophily.
However, this model may suffer from omitted variable bias due to unobserved heterogeneity.
The ordered logit model with fixed effects produces positive and highly significant coefficients for all three homophily variables.
Among them, \textit{both smokers} has the largest effect (2.040), followed by \textit{common gender} (1.268) and \textit{common program} (0.818).
These estimates support strong homophily patterns but remain susceptible to the incidental parameter problem.

Both ETLE and PTLE yield positive effects for all three homophily factors.
We do not report standard errors for the ETLE estimates, as we have only investigated the PTLE standard errors in this paper.
Under PTLE, all coefficients are statistically significant at the 5\% level or better.
While PTLE and ETLE yield the same ranking of homophily factors—assigning the strongest effect to \textit{both smokers}, followed by \textit{common program} and \textit{common gender}—the magnitudes of the estimates differ substantially between the two methods.
This divergence reflects the finite-sample sensitivity of ETLE, especially when high-score dyads are rare.

Binary models using single cutoffs yield positive homophily estimates with magnitudes varying by cutoff choice.
Both binary models assign the largest effect to \textit{both smokers}.
The $m=2$ model assigns the second largest effect to \textit{common program} (0.854), while the $m=3$ model amplifies the effect of \textit{common gender} (2.396) and diminishes the role of \textit{common program}, which becomes negative and insignificant.
These variations demonstrate the sensitivity of binary models to threshold selection.

\begin{table}
% All estimates and standard errors are generated by 20250523_dutch_college_complete.R and dutch_college_funs_with_weight.R under "dutch_college_application" folder.
\centering 
\begin{tabular}{@{\extracolsep{5pt}}lcccccc} 
\toprule
 % & \multicolumn{6}{c}{\textbf{Dependent variable: Friendship at the 5th time point}} \\ 
% \cmidrule{2-7}
 & \shortstack{Ordered\\logit w/o FE} 
 & \shortstack{Ordered\\logit w/ FE} 
 & ETLE 
 & PTLE 
 & \shortstack{Binary\\(2)} 
 & \shortstack{Binary\\(3)} \\ 
\midrule
Common gender & 0.558$^{**}$ & 1.268$^{***}$ & 0.186 & 0.723$^{**}$ & 0.554$^{*}$ & 2.396$^{***}$ \\
             & (0.203) & (0.269) &  & (0.286) & (0.299) & (0.812) \\

Both smokers & 1.022$^{***}$ & 2.040$^{***}$ & 3.472 & 2.431$^{***}$ & 2.397$^{***}$ & 2.805$^{*}$ \\
             & (0.266) & (0.476) &  & (0.765) & (0.779) & (1.490) \\

Common program & 0.473$^{*}$ & 0.818$^{***}$ & 0.544 & 0.732$^{**}$ & 0.854$^{**}$ & -0.452 \\
               & (0.193) & (0.224) & & (0.292) & (0.309) & (0.611) \\

\midrule
Observations & 498 & 498 & 3,319 & 3,319 & 2,649 & 616 \\
\bottomrule
\end{tabular}
\caption{Friendship estimates: Binary (2) and Binary (3) refer to standard binary choice models estimated using the tetrad-differencing technique, with cutoff thresholds $m_{\text{cutoff}}=2$ and $m_{\text{cutoff}}=3$, respectively. Significance levels: $^{*}p<0.10$, $^{**}p<0.05$, $^{***}p<0.01$.}
\label{tab:friendship_estimates_5}
\end{table}
\FloatBarrier

\subsection{Results from multiple time points}

To examine the temporal consistency of homophily patterns, we extend the analysis to all seven time points and plot the estimated coefficients for each dyad-level regressor across different estimation methods (Figure \ref{fig:coefficients}).
For each covariate, the top panel of each figure presents all six methods, while the bottom panel omits the most unstable one, Binary (3), to better highlight trends among the remaining approaches.
The plots enable a direct visual comparison of PTLE against other methods over time.

For the \textit{common gender} variable (Figure \ref{fig:common_gender}), PTLE delivers stable estimates across all time points, with moderate fluctuations around a central value.
ETLE broadly follows the pattern of PTLE but exhibits greater variability, particularly at early and late time points, due to the rarity of observations in some categories, such as ``best friendship.''
In contrast, Binary (3) produces an extremely large spike at time 3 and again at time 4, deviating substantially from the other methods.
These distortions likely reflect the method's sensitivity to sparsity in high-score friendship categories.
The standard ordered logit models, while not directly comparable in scale, produce a temporal trend for this coefficient that mirrors the one from PTLE.
Although the fixed effects specification aims to control for unobserved heterogeneity, it remains vulnerable to the incidental parameter problem in small samples, which can lead to biased estimates.

Turning to \textit{both smokers} (Figure \ref{fig:common_smoker}), PTLE again provides a coherent trend over time, with estimates gradually increasing and peaking around time 5 before declining slightly.
In contrast, ETLE produces substantially larger estimates at several time points, reflecting instability in sparse network settings.
Binary (3) exhibits highly erratic behavior, including an extreme estimate exceeding 9 at time 2, making interpretation unreliable.
PTLE avoids such volatility while still capturing meaningful temporal dynamics.
Once again, the standard ordered logit specifications produce a temporal trend similar to PTLE, though their estimates remain subject to the incidental parameter bias discussed previously.

The results for the \textit{common program} variable (Figure \ref{fig:common_program}) follow a similar pattern.
Binary (3) shows a sharp negative estimate at time 3 and high volatility overall.
ETLE displays an implausible peak at time 4, followed by inconsistent shifts.
Ordered logit models show similar patterns to PTLE, but the estimates differ in magnitude.
PTLE, by contrast, exhibits a gradual and interpretable evolution of the coefficient, increasing in later periods as program-based friendships become more prominent.
This trend aligns with expectations, as students have more time to interact and bond within their academic programs.

To further examine the relative strength of homophily factors, Figure \ref{fig:relative_importance} plots the estimated coefficients of \textit{both smokers} (top panel) and \textit{common program} (bottom panel), each expressed relative to the coefficient for \textit{common gender}.
Across most methods, the relative importance remains broadly consistent over time, suggesting a common relative ranking of homophily effects.
However, ETLE stands out as an exception, exhibiting substantial deviations at several time points.
These instabilities underscore ETLE's sensitivity to sparsity.

Our analysis of friendship among these Dutch students reveals strong and persistent homophily.
We find that students are significantly more likely to form stronger ties with others of the same gender, who share their smoking habits, or who are in the same academic program.
The PTLE estimator produces stable and intuitive results over time—for instance, capturing the growing importance of a shared program.
This stands in sharp contrast to the erratic results from other methods, demonstrating that our approach is essential for drawing reliable conclusions about the drivers of network formation.

\begin{figure}[htbp]
% The code that generates these graphs is tau_diff_graph.R and Updated_network_funs.R under "Simulation" folder.
\centering

\begin{subfigure}[b]{0.45\textwidth}
    \centering
    \includegraphics[width=\textwidth]{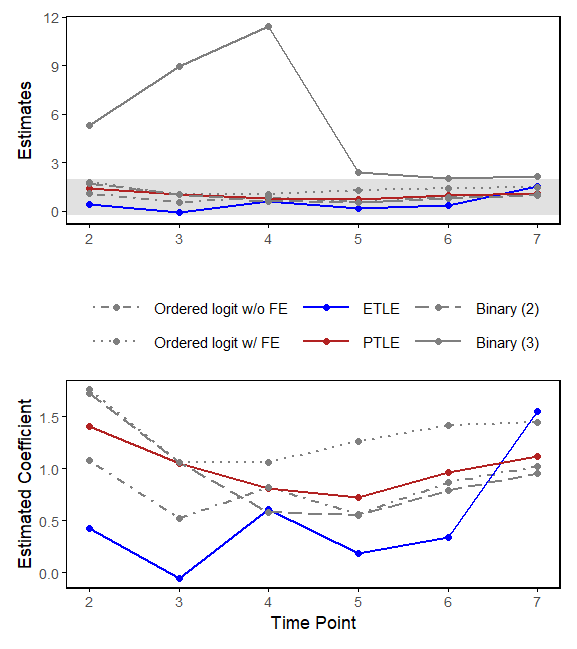}
    \caption{Common Gender}
    \label{fig:common_gender}
\end{subfigure}
\hfill
\begin{subfigure}[b]{0.45\textwidth}
    \centering
    \includegraphics[width=\textwidth]{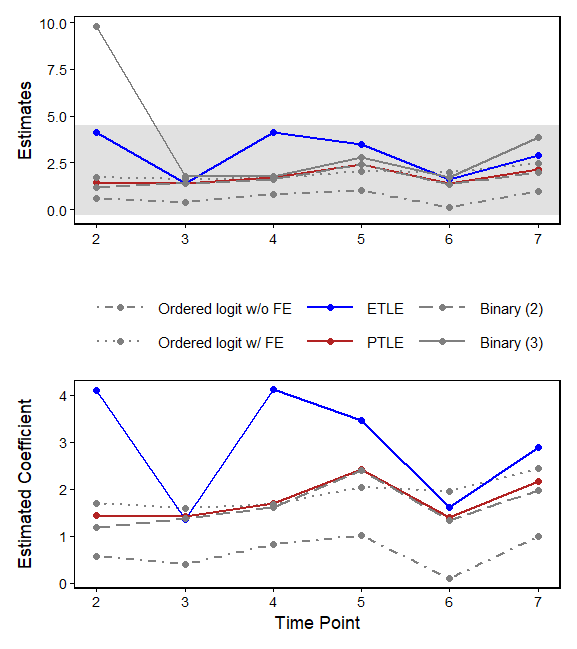}
    \caption{Both Smokers}
    \label{fig:common_smoker}
\end{subfigure}

\vspace{0.5cm}

\begin{subfigure}[b]{0.45\textwidth}
    \centering
    \includegraphics[width=\textwidth]{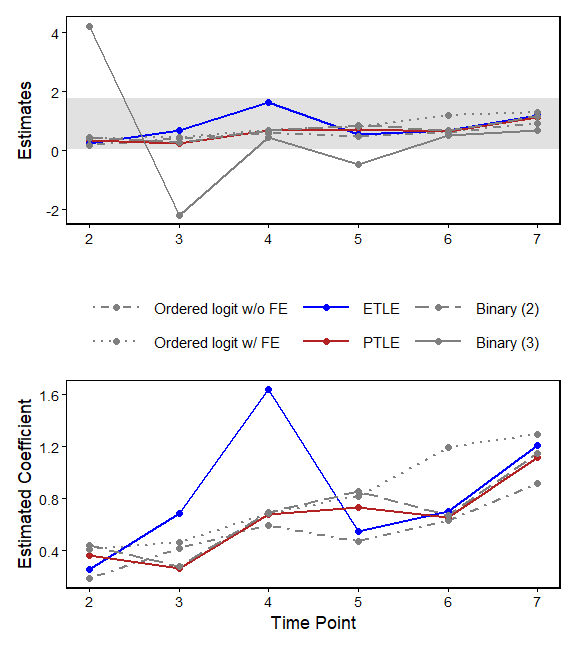}
    \caption{Common Program}
    \label{fig:common_program}
\end{subfigure}
\hfill
\begin{subfigure}[b]{0.45\textwidth}
    \centering
    \includegraphics[width=\textwidth]{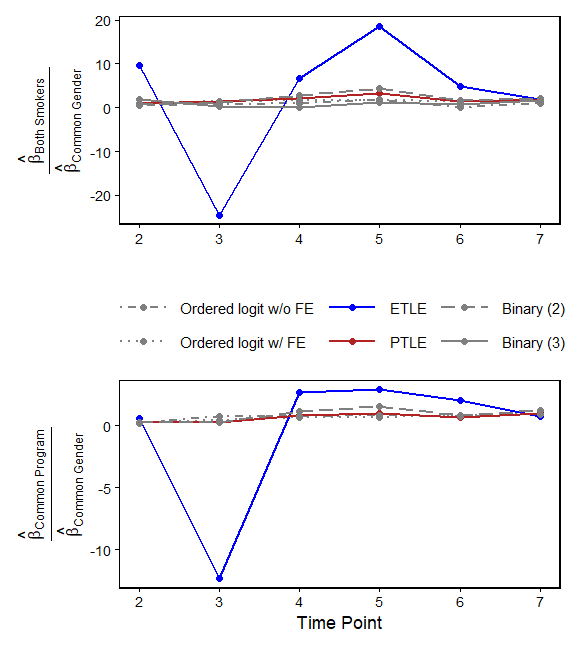}
    \caption{Relative importance}
    \label{fig:relative_importance}
\end{subfigure}

\caption{Estimated coefficients across time points for (a) \textit{common gender}, (b) \textit{both smokers}, (c) \textit{common program}, and (d) relative importance.}
\label{fig:coefficients}
\end{figure}
\FloatBarrier

\section{Alternative fixed effects specifications}
\label{sec:alternative_specifications}

We consider two more restrictive models.

\subsection{Additive fixed effects and common thresholds}
\label{sec:modelI}

In this specification, we impose two key restrictions on the general threshold structure $\lambda_{ijm}^*$ in \eqref{eq:threshold}.
First, we assume common baseline thresholds $\lambda_m$ for all dyads $(i,j)$.
Second, we restricts the node-specific heterogeneity to be additively separable and constant across outcome categories, represented by a sender effect $\alpha_i$ and a receiver effect $\gamma_j$, such that the overall threshold is $\lambda_{ijm}^* = \lambda_m - \alpha_i - \gamma_j$.
This structure is more restrictive than our main model, which allowed sender ($\lambda_{im}$) and receiver ($\delta_{jm}$) effects to vary freely across categories $m$.
In a friendship context, this simpler specification constrains individuals' tendencies to form relationships to be uniform across all rating levels, unlike our main model which permits different thresholds at different points in the relationship scale.
This additive formulation parallels the two-way fixed effects commonly used in panel data ordered choice models (\textcite{dasPanelDataModel1999, johnsonPanelDataModels2004, baetschmannIdentificationEstimationThresholds2012, Baetschmann2015, murisEstimationFixedEffectsOrdered2017, botosaruIdentificationTimevaryingTransformation2023}), where sender/receiver effects correspond to individual/time effects in that setting.

In this model, the thresholds
$$\lambda_{ijm}^* = \lambda_{m} - \alpha_i - \gamma_j$$
consist of a common threshold $\lambda_m$, and fixed effects $(\alpha_i,\gamma_j)$ for outgoing and receiving nodes.
Node-specific fixed effects are gathered in
$F_{i} = (\alpha_{i}, \gamma_{i})$ and
$\mathbf F = (F_{i}: \; i \in \mathbb N_N)$, 
while $\mathbf Y$ and $\mathbf X$ are as before.
In the following model specification,
$\beta_0$ is the true value of the regression coefficient $\beta$,
and $\lambda_{m0}$ the true value of the $m$th threshold, $m=1,\cdots,M$.
Gather the common parameters in
$\theta = (\beta, \lambda_1, \cdots, \lambda_{M})$
with true value
$\theta_0$.

\begin{assumption}\label{a:likelihood_I}
    The conditional likelihood of $\mathbf Y = \mathbf y$ given the covariates $\mathbf X$ and the fixed effects $\mathbf F = (F_i : i \in \mathbb N_N)$, where $F_i = (\alpha_i, \gamma_i)$ for this specification, is
    $$P(\left. \mathbf Y = \mathbf y \right| \mathbf X, \mathbf{F}) =
        \prod_{(i,j)\in\mathcal I_N} P(Y_{ij} = y_{ij}|X_{ij}, F_i, F_j),$$
    with the probability for a single dyad given by
    $$
    P(Y_{ij} = m | X_{ij}, F_i, F_j) =
    \begin{cases}
    1 - \Lambda\left(X_{ij}^\prime \beta_0 + \alpha_{i} + \gamma_{j} - \lambda_{10}\right), & m = 0, \\
    \Lambda\left(X_{ij}^\prime \beta_0 + \alpha_{i} + \gamma_{j} - \lambda_{m0}\right) \\
    \phantom{--} - \Lambda\left(X_{ij}^\prime \beta_0 + \alpha_{i} + \gamma_{j} - \lambda_{m+1,0}\right), & 1 \leq m \leq M-1, \\
    \Lambda\left(X_{ij}^\prime \beta_0 + \alpha_{i} + \gamma_{j} - \lambda_{M0}\right), & m = M.
    \end{cases}
    $$
\end{assumption}

To identify parameters without estimating all fixed effects, we transform our ordered model into a set of binary models. 
For each threshold $m$, we define binary indicators $D_{ij}(m) = \mathbf{1}\{Y_{ij} \geq m\}$. 
Under Assumption \ref{a:likelihood_I}, the probability that this indicator equals one is 
\begin{align*}
P(D_{ij}(m) = 1 | X_{ij}, F_i, F_j) &=
P(Y_{ij} \geq m | X_{ij}, F_i, F_j) \\
&= \Lambda(X_{ij}^\prime \beta_0 + \alpha_i + \gamma_j - \lambda_{m0}).
\end{align*}
By defining $\widetilde \alpha_i = \alpha_i - \lambda_m$, we can rewrite this as $\Lambda(X_{ij}^\prime \beta_0 + \widetilde \alpha_i + \gamma_j)$, which shows that we cannot separately identify the level of the fixed effects and the threshold parameter from a single cutoff $m$.

For a fixed value of $m$, the variables $D_{ij}(m)$ follow the structure of a standard binary choice model for dyadic data with additive sender and receiver fixed effects.
Applying tetrad differencing techniques allows for the identification and estimation of $\beta_0$ by eliminating the additive fixed effects $\tilde{\alpha}_i$ and $\gamma_j$.
Furthermore, by using different combinations of thresholds $\mathbf{m}$, we can also identify differences in the common thresholds, $\lambda_{m0} - \lambda_{m'0}$, as shown in related panel data contexts (\textcite{baetschmannIdentificationEstimationThresholds2012, murisEstimationFixedEffectsOrdered2017}).

To that end, let $\mathbf m = (m_{11},m_{12},m_{21},m_{22})$ denote a vector of four potentially different cutoffs associated with the dyadic outcomes within a tetrad.
The set of all possible tetrads $\Sigma$ is as defined in Equation \eqref{def:Sigma}.

Recall the definition of the tetrad statistic $Z_\sigma(\mathbf m)$ from Equation \eqref{def:Z}.
This statistic depends on the binary outcomes $D_{i_r,j_s}(m_{rs})$ derived from potentially different cutoffs within the tetrad.
As we will show, its conditional distribution given the covariates $X_\sigma$ depends on the differenced covariate vector $r_\sigma = (X_{i_1,j_1} - X_{i_1,j_2}) - (X_{i_2,j_1} - X_{i_2,j_2})$ and $\lambda_0(\mathbf m)$.

\begin{thm}\label{thm:directed_I_sufficiency}
    For the model in Assumption \ref{a:likelihood_I},
    $$
    P\left(\left. Z_{\sigma}(\mathbf m) = 1 \right| Z_{\sigma}(\mathbf m) \in \{-1,+1\}, X_\sigma\right) = \Lambda\left(r_\sigma^\prime \beta_0 - \lambda_0(\mathbf m)\right),
    $$
    where $$
    \lambda_0(\mathbf m) =
    (\lambda_{m_{11}0} - \lambda_{m_{12}0})
    -
    (\lambda_{m_{21}0} - \lambda_{m_{22}0}).
$$
\end{thm}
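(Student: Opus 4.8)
The plan is to mirror the proof of Theorem~\ref{thm:directed_II_sufficiency}, now carrying the threshold parameters through the difference-in-differences rather than cancelling them. Under Assumption~\ref{a:likelihood_I} each binarized dyad $D_{i_r,j_s}(m_{rs})$ is an independent Bernoulli variable with success probability $\Lambda(a_{rs})$, where I abbreviate the latent index as $a_{rs} = X_{i_r,j_s}'\beta_0 + \alpha_{i_r} + \gamma_{j_s} - \lambda_{m_{rs}0}$ and write $D_{rs}=D_{i_r,j_s}(m_{rs})$. First I would record, exactly as in the list preceding Theorem~\ref{thm:directed_II_sufficiency}, that the event $\{Z_\sigma(\mathbf m)\in\{-1,+1\}\}$ is the disjoint union of the configuration $(D_{11},D_{12},D_{21},D_{22})=(1,0,0,1)$, which gives $Z_\sigma(\mathbf m)=+1$, and the configuration $(0,1,1,0)$, which gives $Z_\sigma(\mathbf m)=-1$.

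Conditioning on $(X_\sigma,\mathbf F)$ and using dyad independence, the two joint probabilities factor as products of $\Lambda(a_{rs})$ and $1-\Lambda(a_{rs})$ terms. Taking their ratio cancels the common denominator $\prod_{r,s}(1+e^{a_{rs}})$ and leaves
\[
P\left(\left. Z_\sigma(\mathbf m)=+1 \,\right| Z_\sigma(\mathbf m)\in\{-1,+1\}, X_\sigma, \mathbf F\right)=\Lambda\bigl((a_{11}+a_{22})-(a_{12}+a_{21})\bigr).
\]
The substance of the argument is the algebraic collapse of $(a_{11}+a_{22})-(a_{12}+a_{21})$: since the sum $\alpha_{i_1}+\alpha_{i_2}+\gamma_{j_1}+\gamma_{j_2}$ enters both $a_{11}+a_{22}$ and $a_{12}+a_{21}$ identically, every fixed effect cancels, the covariate terms reduce to $r_\sigma'\beta_0$, and the threshold terms reduce to $-\lambda_0(\mathbf m)$ with $\lambda_0(\mathbf m)$ as defined in the statement. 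This produces the index $r_\sigma'\beta_0-\lambda_0(\mathbf m)$.

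To finish, I would discharge the conditioning on $\mathbf F$. Because the displayed probability depends on $(X_\sigma,\mathbf F)$ only through the $X_\sigma$-measurable quantity $r_\sigma$, I can write $P(Z_\sigma=+1\mid X_\sigma,\mathbf F)=\Lambda(r_\sigma'\beta_0-\lambda_0(\mathbf m))\,P(Z_\sigma\in\{-1,+1\}\mid X_\sigma,\mathbf F)$, take conditional expectations over $\mathbf F$ given $X_\sigma$, pull the $X_\sigma$-measurable factor outside the expectation, and divide by $P(Z_\sigma\in\{-1,+1\}\mid X_\sigma)$ to obtain the claimed identity conditioned on $X_\sigma$ alone.

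The algebra is routine, so I do not expect a genuine obstacle; the one point deserving care is conceptual rather than computational. Unlike the main model of Assumption~\ref{a:likelihood_II}, where the fixed effects $\lambda_{im},\delta_{jm}$ carry the category index and therefore cancel only when all four cutoffs coincide, here $\alpha_i$ and $\gamma_j$ are category-invariant and cancel for \emph{any} cutoff vector $\mathbf m$. This is precisely why varying cutoffs are permitted in this specification and why the residual term $\lambda_0(\mathbf m)$ survives and becomes identifiable, and I would make sure the exposition flags this contrast explicitly.
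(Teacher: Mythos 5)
Your proposal is correct and follows essentially the same route as the paper's proof: compute the two informative-configuration probabilities under conditional dyad independence, take the odds ratio so the common normalizing product cancels, and observe that $(a_{11}+a_{22})-(a_{12}+a_{21})$ collapses to $r_\sigma^\prime\beta_0-\lambda_0(\mathbf m)$ because the category-invariant $\alpha_i$ and $\gamma_j$ cancel for any cutoff vector. Your explicit final step integrating out $\mathbf F$ given $X_\sigma$ is a small addition the paper leaves implicit, but it does not change the argument.
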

\begin{proof}See Appendix \ref{sec:proofs}.\end{proof}

Theorem \ref{thm:directed_I_sufficiency} provides the foundation for identifying the parameter vector $\theta_0$ from the binary choice probabilities implied by Assumption \ref{a:likelihood_I}.
This approach is related to identification strategies used in other fixed-effects ordered choice models (\textcite{murisEstimationFixedEffectsOrdered2017, abrevayaIntervalCensoredRegression2020, botosaruIdentificationTimevaryingTransformation2023}).

First, for any $m \in \{1,\cdots,M\}$, apply Theorem \ref{thm:directed_I_sufficiency} with the common cutoff vector $\mathbf m = (m,m,m,m)$.
In this case, the threshold difference term $\lambda_0(\mathbf m) = (\lambda_{m0} - \lambda_{m0}) - (\lambda_{m0} - \lambda_{m0}) = 0$.
Assuming standard rank conditions, Theorem \ref{thm:directed_I_sufficiency} implies
$$\beta_0 = \left(E[r_\sigma r_\sigma^\prime]\right)^{-1}
                  E[r_\sigma^\prime \Lambda^{-1}(P\left(\left. Z_\sigma(\mathbf m) = 1 \right| Z_\sigma(\mathbf m) \in \{-1,+1\}, X_\sigma\right))].$$

Second, with $\beta_0$ identified from the first step, we can identify the threshold parameters relative to a normalization.
Let $\lambda_{10} = 0$.
Now apply Theorem \ref{thm:directed_I_sufficiency} with the cutoff vector $\mathbf{m} = (m, 1, 1, 1)$ for any $m \in \{2, \dots, M\}$.
For this choice, the threshold difference term is $\lambda_0((m,1,1,1)) = (\lambda_{m0} - \lambda_{10}) - (\lambda_{10} - \lambda_{10}) = \lambda_{m0}$ (using the normalization).
Theorem \ref{thm:directed_I_sufficiency} gives the identification result for the $m$-th threshold:
\begin{align*}
\lambda_{m0} &= - E\left[\Lambda^{-1}
(P\left(\left. Z_\sigma(\mathbf m) = 1 \right| Z_\sigma(\mathbf m) \in \{-1,+1\}, X_\sigma\right)) - r_\sigma^\prime \beta_0\right].
\end{align*}

This analysis suggests that the conditional probabilities $P(Z_\sigma(\mathbf m) = 1 | Z_\sigma(\mathbf m) \in \{-1,+1\}, X_\sigma)$ can be used in a conditional likelihood approach for estimation.
For any given choice of cutoff vector $\mathbf m = (m_{11}, m_{12}, m_{21}, m_{22})$, Theorem \ref{thm:directed_I_sufficiency} provides a conditional likelihood contribution based on the probability $\Lambda(r_\sigma^\prime \beta - \lambda_0(\mathbf m))$, allowing estimation of $\beta_0$ and the specific threshold difference $\lambda_0(\mathbf m)$ from informative tetrads associated with this specific $\mathbf{m}$.

To estimate the full parameter vector $\theta_0 = (\beta_0, \lambda_{10}, \dots, \lambda_{M0})$ efficiently using all available information, we define an estimator analogous to the PTLE proposed for the main model.
Let $\mathcal{M} = \{1, \dots, M\}^4$ be the set of all possible cutoff vectors.
Let $S_{\sigma \mathbf{m}} = \mathbf{1}\{Z_\sigma(\mathbf{m}) \in \{-1, 1\}\}$ indicate if tetrad $\sigma$ is informative for vector $\mathbf{m}$, and let $y^*_{\sigma \mathbf{m}} = \mathbf{1}\{Z_\sigma(\mathbf{m}) = 1\}$ be the corresponding binary outcome.
The pooled CML objective function is:
\begin{equation}
    \label{eq:pooled_cml_model_a}
    L_N(\theta) = \sum_{\sigma \in \Sigma} \sum_{\mathbf{m} \in \mathcal{M}} S_{\sigma \mathbf{m}} \left[ y^*_{\sigma \mathbf{m}} \log \Lambda_{\sigma \mathbf{m}}(\theta) + (1 - y^*_{\sigma \mathbf{m}}) \log(1 - \Lambda_{\sigma \mathbf{m}}(\theta)) \right],
\end{equation}
where $\theta = (\beta, \lambda_2, \dots, \lambda_M)$ incorporates the normalization $\lambda_1=0$, and $\Lambda_{\sigma \mathbf{m}}(\theta) = \Lambda(r_\sigma^\prime \beta - \lambda_0(\mathbf{m}; \theta))$ with $\lambda_0(\mathbf{m}; \theta) = (\lambda_{m_{11}} - \lambda_{m_{12}}) - (\lambda_{m_{21}} - \lambda_{m_{22}})$ being evaluated using the parameters in $\theta$ (and $\lambda_1=0$).
The pooled CML estimator $\hat{\theta}_N$ maximizes $L_N(\theta)$.

\subsection{Heterogeneous sender thresholds and common receiver effects}
\label{sec:modelIII}

This specification offers an intermediate structure between the additive model (Assumption \ref{a:likelihood_I}) and our main model (Assumption \ref{a:likelihood_II}).
Here, we assume thresholds $\lambda_{ijm}^{*} = \lambda_{im} - \gamma_j$, allowing sender-specific components $\lambda_{im}$ to vary freely across categories $m$ while constraining receiver effects $\gamma_j$ to be constant across categories.
In a friendship network context, this permits individuals to have different threshold profiles when rating others (e.g., distinguishing between ``neutral'' and ``best friend'' relationships with varying strictness), while restricting the effect of being rated by others to a simple constant shift.
This contrasts with our main model where both sender and receiver effects could vary by category, and with the additive model where neither could vary by category.

Data $(\mathbf X, \mathbf Y)$ are defined as before.
The incidental parameters for this specification are denoted by $F_i = ((\gamma_i,\lambda_{im}), m = 1, \cdots, M)$.
As before, $\lambda_{im}$ are category-specific threshold shifters that apply when $i$ is the sender node.
The parameter $\gamma_i$ is a node-specific fixed effect applied when $i$ is the receiver node, which is restricted to be constant across categories $m$.
We gather the incidental parameters across all nodes into the vector $\mathbf F = (F_i : i \in \mathbb N_N)$.

\begin{assumption}\label{a:likelihood_III}
    The conditional likelihood of the ordered choice $\mathbf Y = \mathbf y$ given the covariates $\mathbf X$ and the fixed effects $\mathbf{F}$ is
    $$P(\mathbf Y = \mathbf y| \mathbf X, \mathbf{F}) =
        \prod_{(i,j)\in\mathcal I_N} P(Y_{ij} = y_{ij}|X_{ij}, F_i, F_j),$$
    with the probability for a single dyad given by
    $$
    P(Y_{ij} = m | X_{ij}, F_i, F_j) =
    \begin{cases}
    1 - \Lambda\left(
    X_{ij}^\prime \beta_0 + \gamma_{j} - \lambda_{i1}\right), & m = 0, \\

    \Lambda\left(
    X_{ij}^\prime \beta_0 +\gamma_{j} - \lambda_{im}
    \right)
    -
    \Lambda\left(
    X_{ij}^\prime \beta_0 +\gamma_{j} - \lambda_{i,m+1}
    \right), & 1 \leq m \leq M-1, \\

    \Lambda\left(
    X_{ij}^\prime \beta_0 +\gamma_{j} - \lambda_{iM}
    \right), & m = M.
    \end{cases}
    $$
\end{assumption}

Under Assumption \ref{a:likelihood_III}, the probability for the binary variable $D_{ij}(m)=\mathbf{1}\{Y_{ij} \ge m\}$ is given by:
$$
P(D_{ij}(m) = 1 | X_{ij}, F_i, F_j) = \Lambda( X_{ij}^\prime \beta_0 +\gamma_{j} - \lambda_{im}).
$$

Our identification strategy here differs from that used for the main model (Theorem \ref{thm:directed_II_sufficiency}) due to the specific structure of the fixed effects ($\lambda_{im}, \gamma_j$) in Assumption \ref{a:likelihood_III}.
The distribution of the general tetrad difference $Z_\sigma(\mathbf{m})$ depends on the difference-in-differences of the dyad-specific thresholds $\lambda_{ijm}^* = \lambda_{im} - \gamma_j$.
This difference term, $\Delta\lambda(\mathbf{m}) = (\lambda^*_{i_1j_1m_{11}} - \lambda^*_{i_1j_2m_{12}}) - (\lambda^*_{i_2j_1m_{21}} - \lambda^*_{i_2j_2m_{22}})$, simplifies in this specification because the receiver effects $\gamma_j$ cancel out, leaving $\Delta\lambda(\mathbf{m}) = (\lambda_{i_1m_{11}} - \lambda_{i_1m_{12}}) - (\lambda_{i_2m_{21}} - \lambda_{i_2m_{22}})$.
For this term to be zero without imposing restrictions on the $\lambda_{im}$ values themselves, we require the terms within each parenthesis to vanish.
Thus, a sufficiency result yielding a conditional probability free of fixed effects under Assumption \ref{a:likelihood_III} requires $m_{11} = m_{12}$ and $m_{21} = m_{22}$.

Based on the requirement that $m_{11}=m_{12}$ and $m_{21}=m_{22}$ for the fixed effects to cancel, we define a specialized tetrad statistic for this specification.
Let $(m,m') \in \{1,\cdots,M\}^2$ be any pair of cutoffs.
For any tetrad $\sigma$, we define the difference-in-differences statistic $Z_\sigma(m;m')$ using the general definition in Equation \eqref{def:Z} with the specific cutoff vector $\mathbf{m}=(m,m,m',m')$:
$$ Z_\sigma(m;m') \equiv Z_\sigma((m,m,m',m')). $$
This statistic uses a common cutoff $m$ for dyads involving sender $i_1$ and potentially a different common cutoff $m'$ for dyads involving sender $i_2$.

\begin{thm}\label{thm:directed_III_sufficiency}
    For the model in Assumption \ref{a:likelihood_III}, and for any $(m,m') \in \{ 1,\cdots,M\}^2$,
    $$
    P\left(\left. Z_\sigma(m;m') = 1 \right| Z_\sigma(m;m') \in \{-1,+1\}, X_{\sigma}\right) = \Lambda\left(r_\sigma^\prime \beta_0 \right).
    $$
\end{thm}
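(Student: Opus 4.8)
The plan is to reproduce the conditional-likelihood (tetrad-differencing) computation behind Theorem~\ref{thm:directed_II_sufficiency}, but to track the fixed effects under the asymmetric cutoff pattern $\mathbf{m}=(m,m,m',m')$ that is specific to Assumption~\ref{a:likelihood_III}. First I would unpack the conditioning event: $Z_\sigma(m;m')\in\{-1,+1\}$ is equivalent to exactly one of two outcome configurations of the four indicators $D_{i_1,j_1}(m)$, $D_{i_1,j_2}(m)$, $D_{i_2,j_1}(m')$, $D_{i_2,j_2}(m')$, namely $(1,0,0,1)$ giving $Z_\sigma=+1$ and $(0,1,1,0)$ giving $Z_\sigma=-1$, exactly as in the itemized characterization preceding Theorem~\ref{thm:directed_II_sufficiency}. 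Because the four constituent dyads are distinct and Assumption~\ref{a:likelihood_III} renders the dyadic outcomes conditionally independent given $(X_\sigma,\mathbf{F})$, each configuration's probability factors into four Bernoulli terms. Writing $a_{11}=X_{i_1,j_1}'\beta_0+\gamma_{j_1}-\lambda_{i_1m}$, $a_{12}=X_{i_1,j_2}'\beta_0+\gamma_{j_2}-\lambda_{i_1m}$, $a_{21}=X_{i_2,j_1}'\beta_0+\gamma_{j_1}-\lambda_{i_2m'}$, and $a_{22}=X_{i_2,j_2}'\beta_0+\gamma_{j_2}-\lambda_{i_2m'}$ for the four linear indices, each success probability is $\Lambda(a_{rs})$.

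Next I would form the conditional probability of $Z_\sigma=+1$ by dividing the probability of the $(1,0,0,1)$ configuration by the sum of the two configuration probabilities. Passing to the odds ratio $P(Z_\sigma=+1\mid\cdot)/P(Z_\sigma=-1\mid\cdot)$ and using the logistic identities $\Lambda(a)/(1-\Lambda(a))=e^{a}$ and $1-\Lambda(a)=\Lambda(-a)$, the common normalizing factor cancels and the four Bernoulli factors combine so that the odds ratio reduces to $\exp(a_{11}-a_{12}-a_{21}+a_{22})$.

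The crux is evaluating the exponent $a_{11}-a_{12}-a_{21}+a_{22}$ and checking that all incidental parameters vanish. The sender term $\lambda_{i_1m}$ appears identically in $a_{11}$ and $a_{12}$ (same sender $i_1$, same cutoff $m$) and so cancels in $a_{11}-a_{12}$; symmetrically $\lambda_{i_2m'}$ cancels in $a_{21}-a_{22}$---this is precisely what forces the matched-cutoff structure $m_{11}=m_{12}$, $m_{21}=m_{22}$. The receiver effects cancel for a different reason: since $\gamma_j$ does not depend on the category, $+\gamma_{j_1}$ from $a_{11}$ is offset by $-\gamma_{j_1}$ from $-a_{21}$, and $-\gamma_{j_2}$ from $-a_{12}$ is offset by $+\gamma_{j_2}$ from $+a_{22}$, irrespective of whether $m=m'$. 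What remains is exactly $r_\sigma'\beta_0$, so the odds ratio equals $e^{r_\sigma'\beta_0}$ and hence $P(Z_\sigma=+1\mid Z_\sigma\in\{-1,+1\},X_\sigma,\mathbf{F})=\Lambda(r_\sigma'\beta_0)$. Since this expression is free of $\mathbf{F}$ and $r_\sigma$ is a function of $X_\sigma$, the law of iterated expectations lets me drop the conditioning on $\mathbf{F}$ and retain only $X_\sigma$, which completes the proof.

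The main obstacle---more a careful bookkeeping step than a genuine difficulty---is confirming that the two distinct cancellation mechanisms operate simultaneously: matched cutoffs within each sender row eliminate the category-varying sender effects $\lambda_{im}$, while the category-invariance of $\gamma_j$ eliminates the receiver effects even when $m\neq m'$. Getting the signs right in the difference-in-differences $a_{11}-a_{12}-a_{21}+a_{22}$ is where an error would most easily creep in, so I would verify each of the four incidental terms explicitly before concluding.
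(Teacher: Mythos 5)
Your proposal is correct and follows essentially the same route as the paper's proof: conditional independence of the four dyads, the odds ratio $\exp(a_{11}-a_{12}-a_{21}+a_{22})$ with the matched cutoffs $(m,m,m',m')$ cancelling the sender effects $\lambda_{i_1m},\lambda_{i_2m'}$ and the category-invariance of $\gamma_j$ cancelling the receiver effects. Your linear indices $a_{rs}$ coincide with the paper's $\xi_{rs}$, and the final step of dropping the conditioning on $\mathbf F$ matches as well.
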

\begin{proof}See Appendix \ref{sec:proofs}.\end{proof}

\appendix
\section{Proofs}\label{sec:proofs}
\subsection{Sufficiency}

\begin{proof}[Proof of Theorem \ref{thm:directed_II_sufficiency}]
\label{proof:sufficiency_II}
    The technique used here is similar to that in the proof of Theorem \ref{thm:directed_I_sufficiency}.
    We first demonstrate why using the same cutoff $m$ across all dyads in the tetrad is necessary for identification in this model.
    We start with the general case using potentially different cutoffs $\mathbf m = (m_{11},m_{12},m_{21},m_{22})$ and show that the fixed effects only cancel when $m_{11}=m_{12}=m_{21}=m_{22}=m$.

    Let $\sigma = (i_1, i_2, j_1, j_2)$ be a tetrad.
    Define the shorthand $\eta_{rs} = X_{i_r,j_s}^\prime \beta_0 - \lambda_{i_r,m_{rs}} - \delta_{j_s,m_{rs}}$ for $r,s \in \{1,2\}$.
    Define the normalization constant:
    \begin{align*}
        K = \prod_{r=1}^2 \prod_{s=1}^2 [1 + \exp(\eta_{rs})].
    \end{align*}
    This $K$ is the denominator term arising from the product of the four logistic probabilities.

    Under Assumption \ref{a:likelihood_II}, the conditional probability for the binarized variable $D_{ij}(m) = \mathbf{1}\{Y_{ij} \ge m\}$ is:
    $$P(D_{ij}(m) = 1 | X_{ij}, F_i, F_j) = \Lambda( X_{ij}^\prime \beta_0 - \lambda_{im} - \delta_{jm}).$$
    These variables $D_{ij}(m_{rs})$ are independent across dyads $(i,j)$ conditional on covariates $\mathbf X$ and fixed effects $\mathbf F$ (Assumption \ref{a:likelihood_II}).

    Recall the definition $Z_\sigma(\mathbf m) = \frac{1}{2}((D_{i_1,j_1}(m_{11})-D_{i_1,j_2}(m_{12})) - (D_{i_2,j_1}(m_{21})-D_{i_2,j_2}(m_{22})))$.
    Using the independence conditional on $X_\sigma$ and $\mathbf{F}$, we find:
    \begin{align*}
        P(Z_\sigma(\mathbf m)=+1 | X_\sigma, \mathbf{F})
        &= \Lambda(\eta_{11}) (1-\Lambda(\eta_{12})) (1-\Lambda(\eta_{21})) \Lambda(\eta_{22}) \\
        &= \frac{\exp(\eta_{11}) \times 1 \times 1 \times \exp(\eta_{22})}{K} = \frac{\exp(\eta_{11} + \eta_{22})}{K},
        \\
        P(Z_\sigma(\mathbf m)=-1 | X_\sigma, \mathbf{F})
        &= (1-\Lambda(\eta_{11})) \Lambda(\eta_{12}) \Lambda(\eta_{21}) (1-\Lambda(\eta_{22})) \\
        &= \frac{1 \times \exp(\eta_{12}) \times \exp(\eta_{21}) \times 1}{K} = \frac{\exp(\eta_{12} + \eta_{21})}{K}.
    \end{align*}

    The odds ratio is therefore:
    \begin{align*}
        \frac{P(Z_\sigma(\mathbf m)=+1| X_\sigma, \mathbf{F})}{P(Z_\sigma(\mathbf m)=-1| X_\sigma, \mathbf{F})}
        &= \frac{\exp(\eta_{11} + \eta_{22})}{\exp(\eta_{12} + \eta_{21})} \\
        &= \exp(\eta_{11} + \eta_{22} - \eta_{12} - \eta_{21}) \\
        &= \exp\left( (X_{i_1,j_1}^\prime + X_{i_2,j_2}^\prime - X_{i_1,j_2}^\prime - X_{i_2,j_1}^\prime) \beta_0 \right. \\
        & \qquad \left. - [ (\lambda_{i_1,m_{11}} + \delta_{j_1,m_{11}}) + (\lambda_{i_2,m_{22}}+ \delta_{j_2,m_{22}}) \right. \\
        & \qquad \qquad \left. - (\lambda_{i_1,m_{12}}+ \delta_{j_2,m_{12}}) - (\lambda_{i_2,m_{21}} + \delta_{j_1,m_{21}}) ] \right) \\
        &= \exp( r_\sigma^\prime \beta_0 - \Delta\lambda(\mathbf{m}) ),
    \end{align*}
    where $\Delta\lambda(\mathbf{m})$ is as in \eqref{def:DeltaLambda}.
    The conditional probability $P(Z_\sigma(\mathbf m)=1| Z_\sigma(\mathbf m) \in \{-1,1\}, X_\sigma, \mathbf{F})$ is $\Lambda(r_\sigma^\prime \beta_0 - \Delta\lambda(\mathbf{m}))$.

    For inference on $\beta_0$ free of $\mathbf{F}$, we need $\Delta\lambda(\mathbf{m})=0$.
    Given the structure $\Delta\lambda(\mathbf{m}) = (\lambda_{i_1,m_{11}} - \lambda_{i_1,m_{12}}) + \dots$, this term equals zero for arbitrary fixed effects if and only if the indices within each difference are equal.
    That is, we require $m_{11}=m_{12}$ (for $\lambda_{i_1}$ terms), $m_{22}=m_{21}$ (for $\lambda_{i_2}$ terms), $m_{11}=m_{21}$ (for $\delta_{j_1}$ terms), and $m_{22}=m_{12}$ (for $\delta_{j_2}$ terms).
    These conditions together imply $m_{11} = m_{12} = m_{21} = m_{22} = m$ for some $m \in \{1, \dots, M\}$.
    When $\mathbf m = (m,m,m,m)$, the statistic is $Z_\sigma(\mathbf m) = \overline Z_\sigma(m)$ (defined in Eq. \eqref{def:Z_bar}), and the fixed effects difference $\Delta\lambda((m,m,m,m))=0$.

    The odds ratio simplifies to:
    \begin{align*}
        \frac{P(\overline Z_\sigma(m)=+1| X_\sigma, \mathbf{F})}{P(\overline Z_\sigma(m)=-1| X_\sigma, \mathbf{F})}
        &= \exp( r_\sigma^\prime \beta_0 - 0 ) \\
        &= \exp(r_\sigma^\prime \beta_0).
    \end{align*}
    From this, the conditional probability $P(\overline{Z}_\sigma(m)=1 | \overline{Z}_\sigma(m) \in \{-1,1\}, X_\sigma, \mathbf{F})$ is:
    $$ \frac{\exp(r_\sigma^\prime \beta_0)}{1 + \exp(r_\sigma^\prime \beta_0)} = \Lambda(r_\sigma^\prime \beta_0). $$
    Since this final probability does not depend on $\mathbf{F}$, we can write it as $P(\overline{Z}_\sigma(m)=1 | \overline{Z}_\sigma(m) \in \{-1,1\}, X_\sigma)$, establishing Theorem \ref{thm:directed_II_sufficiency}.
\end{proof}

\begin{proof}[Proof of Theorem \ref{thm:directed_I_sufficiency}]
\label{proof:directed_I_sufficiency}
    Let $\sigma = (i_1, i_2, j_1, j_2)$ be a fixed tetrad and \[\mathbf{m} = (m_{11}, m_{12}, m_{21}, m_{22})\] be fixed cutoffs. 
    Define the shorthand $\zeta_{rs} = X_{i_r,j_s}^\prime \beta_0 + \alpha_{i_r} + \gamma_{j_s} - \lambda_{m_{rs}0}$ for $r,s \in \{1,2\}$.
    Define the normalization constant:
    \begin{align*}
        K &= \prod_{r=1}^2 \prod_{s=1}^2 [1 + \exp(\zeta_{rs})].
    \end{align*}
    This $K$ is the denominator term arising from the product of the four logistic probabilities.

    Under Assumption \ref{a:likelihood_I}, the conditional probability for the binarized variable $D_{ij}(m) = \mathbf{1}\{Y_{ij} \ge m\}$ is:
    $$P(D_{ij}(m) = 1 | X_{ij}, F_i, F_j) = \Lambda(X_{ij}^\prime \beta_0 + \alpha_i + \gamma_j  - \lambda_{m0}).$$
    These variables $D_{ij}(m_{rs})$ are independent across dyads $(i,j)$ conditional on covariates $\mathbf X$ and fixed effects $\mathbf F$ (Assumption \ref{a:likelihood_I}).

    Using the independence conditional on $X_\sigma$ and $\mathbf{F}$, we find the probabilities of informative tetrads:
    \begin{align*}
        P(Z_\sigma(\mathbf m)=+1 | X_\sigma, \mathbf{F})
        &= \Lambda(\zeta_{11}) (1-\Lambda(\zeta_{12})) (1-\Lambda(\zeta_{21})) \Lambda(\zeta_{22}) \\
        &= \frac{\exp(\zeta_{11}) \times 1 \times 1 \times \exp(\zeta_{22})}{K} = \frac{\exp(\zeta_{11} + \zeta_{22})}{K},
        \\
        P(Z_\sigma(\mathbf m)=-1 | X_\sigma, \mathbf{F})
        &= (1-\Lambda(\zeta_{11})) \Lambda(\zeta_{12}) \Lambda(\zeta_{21}) (1-\Lambda(\zeta_{22})) \\
        &= \frac{1 \times \exp(\zeta_{12}) \times \exp(\zeta_{21}) \times 1}{K} = \frac{\exp(\zeta_{12} + \zeta_{21})}{K}.
    \end{align*}

    The odds ratio is therefore:
    \begin{align*}
        \frac{P(Z_\sigma(\mathbf m)=+1| X_\sigma, \mathbf{F})}{P(Z_\sigma(\mathbf m)=-1| X_\sigma, \mathbf{F})}
        &= \frac{\exp(\zeta_{11} + \zeta_{22})}{\exp(\zeta_{12} + \zeta_{21})} \\
        &= \exp(\zeta_{11} + \zeta_{22} - \zeta_{12} - \zeta_{21}) \\
        &= \exp\left( (X_{i_1,j_1}^\prime + X_{i_2,j_2}^\prime - X_{i_1,j_2}^\prime - X_{i_2,j_1}^\prime) \beta_0 \right. \\
        & \qquad \left. + (\alpha_{i_1} + \gamma_{j_1} - \lambda_{m_{11}0}) + (\alpha_{i_2} + \gamma_{j_2} - \lambda_{m_{22}0}) \right. \\
        & \qquad \qquad \left. - (\alpha_{i_1} + \gamma_{j_2} - \lambda_{m_{12}0}) - (\alpha_{i_2} + \gamma_{j_1} - \lambda_{m_{21}0}) \right) \\
        &= \exp\left( r_\sigma^\prime \beta_0 + (\alpha_{i_1} - \alpha_{i_1}) + (\alpha_{i_2} - \alpha_{i_2}) + (\gamma_{j_1} - \gamma_{j_1}) + (\gamma_{j_2} - \gamma_{j_2}) \right. \\
        & \qquad \qquad \left. - [ (\lambda_{m_{11}0} - \lambda_{m_{12}0}) - (\lambda_{m_{21}0} - \lambda_{m_{22}0}) ] \right) \\
        &= \exp( r_\sigma^\prime \beta_0 - \lambda_0(\mathbf m) ),
    \end{align*}
    where $\lambda_0(\mathbf m)$ is as in Theorem \ref{thm:directed_I_sufficiency}. 

    The conditional probability $P(Z_\sigma(\mathbf m)=1| Z_\sigma(\mathbf m) \in \{-1,1\}, X_\sigma, \mathbf{F})$ is obtained from the odds ratio $O = \exp( r_\sigma^\prime \beta_0 - \lambda_0(\mathbf m) )$ as $O / (1+O)$:
    \begin{align*}
        P(Z_\sigma(\mathbf m)=1| Z_\sigma(\mathbf m) \in\{-1,1\}, X_\sigma, \mathbf{F})
        &= \frac{P(Z_\sigma(\mathbf m)=1| X_\sigma, \mathbf{F})}
             {P(Z_\sigma(\mathbf m)=1| X_\sigma, \mathbf{F}) + P(Z_\sigma(\mathbf m)=-1| X_\sigma, \mathbf{F})} \\
        &= \frac{P(Z_\sigma(\mathbf m)=1| X_\sigma, \mathbf{F}) / P(Z_\sigma(\mathbf m)=-1| X_\sigma, \mathbf{F})}
             {P(Z_\sigma(\mathbf m)=1| X_\sigma, \mathbf{F})/P(Z_\sigma(\mathbf m)=-1| X_\sigma, \mathbf{F}) + 1} \\
        &= \frac{\exp\left( r_\sigma^\prime \beta_0 - \lambda_0(\mathbf m)\right)}
             {\exp\left( r_\sigma^\prime \beta_0 - \lambda_0(\mathbf m)\right) + 1} \\
        &= \Lambda\left( r_\sigma^\prime \beta_0 - \lambda_0(\mathbf m)\right).
    \end{align*}
\end{proof}

\begin{proof}[Proof of Theorem \ref{thm:directed_III_sufficiency}]
\label{proof:directed_III_sufficiency}
    Let $\sigma = (i_1, i_2, j_1, j_2)$ be a fixed tetrad. 
    We use the specific cutoff vector $\mathbf{m}=(m,m,m',m')$ as required by the specialized tetrad statistic $Z_\sigma(m;m') \equiv Z_\sigma((m,m,m',m'))$ defined in Section \ref{sec:modelIII}.

    Under Assumption \ref{a:likelihood_III}, the threshold structure is $\lambda_{ijm}^* = \lambda_{im} - \gamma_j$. The probability for the binarized variable $D_{ij}(k) = \mathbf{1}\{Y_{ij} \ge k\}$ is
    $$ P(D_{ij}(k)=1 | X_{ij}, F_i, F_j) = \Lambda(X_{ij}'\beta_0 + \gamma_j - \lambda_{ik}). $$
    Define the shorthand $\xi_{rs} = X_{i_r,j_s}^\prime \beta_0 + \gamma_{j_s} - \lambda_{i_r, m_{rs}}$, where $m_{11}=m$, $m_{12}=m$, $m_{21}=m'$, and $m_{22}=m'$. Specifically:
    \begin{align*}
        \xi_{11} &= X_{i_1,j_1}'\beta_0 + \gamma_{j_1} - \lambda_{i_1, m} \\
        \xi_{12} &= X_{i_1,j_2}'\beta_0 + \gamma_{j_2} - \lambda_{i_1, m} \\
        \xi_{21} &= X_{i_2,j_1}'\beta_0 + \gamma_{j_1} - \lambda_{i_2, m'} \\
        \xi_{22} &= X_{i_2,j_2}'\beta_0 + \gamma_{j_2} - \lambda_{i_2, m'}
    \end{align*}
    Define the normalization constant:
    \begin{align*}
        K &= \prod_{r=1}^2 \prod_{s=1}^2 [1 + \exp(\xi_{rs})].
    \end{align*}

    The relevant conditional probabilities are:
    \begin{align*}
        P(Z_\sigma(m;m')=+1 | X_\sigma, \mathbf{F})
        &= \Lambda(\xi_{11}) (1-\Lambda(\xi_{12})) (1-\Lambda(\xi_{21})) \Lambda(\xi_{22}) \\
        &= \frac{\exp(\xi_{11}) \times 1 \times 1 \times \exp(\xi_{22})}{K} = \frac{\exp(\xi_{11} + \xi_{22})}{K},
        \\
        P(Z_\sigma(m;m')=-1 | X_\sigma, \mathbf{F})
        &= (1-\Lambda(\xi_{11})) \Lambda(\xi_{12}) \Lambda(\xi_{21}) (1-\Lambda(\xi_{22})) \\
        &= \frac{1 \times \exp(\xi_{12}) \times \exp(\xi_{21}) \times 1}{K} = \frac{\exp(\xi_{12} + \xi_{21})}{K}.
    \end{align*}
    It follows that the odds ratio is:
    \begin{align*}
        \frac{P(Z_\sigma(m;m')=+1| X_\sigma, \mathbf{F})}{P(Z_\sigma(m;m')=-1| X_\sigma, \mathbf{F})}
        &= \frac{\exp(\xi_{11} + \xi_{22})}{\exp(\xi_{12} + \xi_{21})} \\
        &= \exp(\xi_{11} + \xi_{22} - \xi_{12} - \xi_{21}) \\
        &= \exp\left( (X_{i_1,j_1}^\prime \beta_0 + \gamma_{j_1} - \lambda_{i_1, m}) + (X_{i_2,j_2}^\prime \beta_0 + \gamma_{j_2} - \lambda_{i_2, m'}) \right. \\
        & \qquad \qquad \left. - (X_{i_1,j_2}^\prime \beta_0 + \gamma_{j_2} - \lambda_{i_1, m}) - (X_{i_2,j_1}^\prime \beta_0 + \gamma_{j_1} - \lambda_{i_2, m'}) \right) \\
        &= \exp\left( (X_{i_1,j_1}^\prime - X_{i_1,j_2}^\prime - X_{i_2,j_1}^\prime + X_{i_2,j_2}^\prime) \beta_0 \right. \\
        & \qquad \qquad \left. + (\gamma_{j_1} - \gamma_{j_1}) + (\gamma_{j_2} - \gamma_{j_2}) - (\lambda_{i_1, m} - \lambda_{i_1, m}) - (\lambda_{i_2, m'} - \lambda_{i_2, m'}) \right) \\
        &= \exp( r_\sigma^\prime \beta_0 + 0 + 0 - 0 - 0 ) \\
        &= \exp( r_\sigma^\prime \beta_0 ).
    \end{align*}
    All fixed effects cancel out due Assumption \ref{a:likelihood_III} and the choice of cutoffs $\mathbf{m}=(m,m,m',m')$.

    It follows that
    \begin{align*}
        P(Z_\sigma(m;m')=1| Z_\sigma(m;m') \in\{-1,1\}, X_\sigma, \mathbf{F})
        &= \frac{\exp\left( r_\sigma^\prime \beta_0 \right)}
             {\exp\left( r_\sigma^\prime \beta_0 \right) + 1} \\
        &= \Lambda\left( r_\sigma^\prime \beta_0 \right).
    \end{align*}
\end{proof}

\subsection{Consistency}

\begin{proof}[Proof of Theorem \ref{thm:consistency_II}]
\label{proof:consistency_II}

We prove parts (a) and (b) separately.
Part (a) shows consistency of ETLE $\check\beta_N$. 
Part (b) shows consistency of PTLE $\widehat\beta_N$.

% --- Part (a): Consistency of ETLE ---
\paragraph{Part (a): ETLE.}
This proof modifies the proof of Theorem 1 in \textcite{jochmansSemiparametricAnalysisNetwork2018}.
The sample objective function \eqref{eq:sample_objective_function_a} is a sum of $M$ terms, $Q_{mN}(\beta) = \left(q_{mN}^*\right)^{-1} L_{m N}(\beta).$
For each $m$, consider
\[
Q_{mN0}(\beta) = \left(q_{N} p_{mN}\right)^{-1} E \left[ L_{m N}(\beta) \right]
\]
and its limiting
\[
Q_{m0}(\beta) = \lim_{N\to\infty} \left(q_{N} p_{mN}\right)^{-1} E \left[ L_{m N}(\beta) \right].
\]
Each $Q_{m0}$ is concave and uniquely maximized at $\beta_0$ by Assumption \ref{a:identification_II_strong}(2), so the limit $\sum_{m=1}^M Q_{m0}$ of the sample objective function is concave and uniquely maximized at $\beta_0$.

For each $m$, we show mean square convergence of $Q_{mN}$ to $Q_{mN0}$.
Following \textcite{jochmansSemiparametricAnalysisNetwork2018}, we decompose:
\begin{equation*}
    \frac{\sum_\sigma l_{m\sigma}(\beta)}{q^*_{mN}} - \frac{\sum_\sigma E(l_{m\sigma}(\beta))}{q_N p_{mN}}
    =
    \underbrace{\frac{\sum_{\sigma} (l_{m\sigma}(\beta)-E(l_{m\sigma}(\beta)))}{q_N p_{mN}}}_{A_{N1}}
    +
    \underbrace{\frac{\sum_{\sigma} l_{m\sigma}(\beta)}{q_N p_{mN}} \left(\frac{q_N p_{mN}}{q^*_{mN}} - 1\right)}_{A_{N2}}.
\end{equation*}

The second moment of $l_{m\sigma}(\beta)$ is bounded because
\begin{align*}
    \text{Var}(l_{m\sigma}(\beta)) & = E[((l_{m\sigma}(\beta)-E[l_{m\sigma}(\beta)])^2] \\
    & = E(l^2_{m\sigma}(\beta)) - [E(l_{m\sigma}(\beta))]^2 \\
    & \leq E(|l_{m\sigma}(\beta)|^2) \\
    & \leq E[(\log2 + 2 \| r_{\sigma}\| \|\beta\|)^2]\\
    & = (\log 2)^2 + 4\log 2E(\| r_{\sigma}\|)\|\beta\|+4E(\|r_{\sigma}\|^2)\|\beta\|^2,
\end{align*}
where the last inequality uses Assumptions \ref{a:parameter_II} and \ref{a:moments_II}.

For the second moment of the numerator of $A_{N1}$, define $\widetilde l_{m\sigma} = l_{m\sigma}(\beta) - E[l_{m\sigma}(\beta)]$ and
$$D(\sigma) = \left\{ \sigma' : \sigma \text{ and } \sigma' \text{ have at least one node common}\right\}.$$
Then:
\begin{align*}
    E\left[\left(\sum_\sigma \widetilde l_{m\sigma}\right)^2\right]
    &=
    \sum_\sigma \sum_{\sigma'} E \left[\widetilde l_{m\sigma} \widetilde l_{m\sigma'}\right] \\
    &=
    \sum_\sigma \sum_{\sigma' \in D(\sigma)} E \left[\widetilde l_{m\sigma} \widetilde l_{m\sigma'}\right] \\
    &\leq
    \sum_\sigma \sum_{\sigma' \in D(\sigma)} \sqrt{E \left[\widetilde l_{m\sigma}^2 \right] E \left[\widetilde l_{m\sigma'}^2 \right]} \\
    &\leq
    \sum_\sigma \sum_{\sigma' \in D(\sigma)} \frac{E \left[\widetilde l_{m\sigma}^2  \right] + E \left[\widetilde l_{m\sigma'}^2\right]}{2} \\
    &\propto N^3 \sum_\sigma E \left[\widetilde l_{m\sigma}^2  \right] \\
    &= O(N^3 q_{N} p_{mN}),
\end{align*}
where the first line expands the square, the second uses independence of non-overlapping tetrads (Assumption \ref{a:sampling_II}), the third applies Cauchy-Schwarz, the fourth uses the arithmetic-geometric mean inequality, and the fifth uses $|D(\sigma)| \propto N^3$. This dependence structure leads to $E[(\sum_\sigma \widetilde l_{m\sigma})^2] = O(N^3 q_N p_{mN})$, consistent with the variance calculation for sums over dependent quadruples in \textcite[Proof of Theorem 1]{jochmansSemiparametricAnalysisNetwork2018}.

Since $q_N = O(N^4)$, we have $E[A_{N1}^2] = O\left(\frac{N^3 q_N p_{mN}}{(q_N p_{mN})^2}\right) = O\left(\frac{1}{Np_{mN}}\right)$.
Mean square convergence (and thus convergence in probability) of $A_{N1}$ to zero follows from Assumption \ref{a:identification_II_strong}(i), which requires $N p_{mN} \rightarrow \infty$.

For $A_{N2}$, note that $q_{mN}^*/q_N$ is the sample average corresponding to $p_{mN} = E[q^*_{mN}]/q_N$. As established in the proof of Theorem 1 in \textcite{jochmansSemiparametricAnalysisNetwork2018}, the network dependence implies $Var(q^*_{mN}/q_N) = O(1/(Np_{mN}))$. Thus, $(q_{mN}^*/q_N - p_{mN}) \to_p 0$ provided $Np_{mN} \to \infty$ (Assumption \ref{a:identification_II_strong}(i)), which ensures $q^*_{mN}/(q_N p_{mN}) \to_p 1$. The second moment of the first factor $\frac{\sum_{\sigma} l_{m\sigma}(\beta)}{q_N p_{mN}}$ is bounded. Therefore, $A_{N2} \to_p 0$.

Since both $A_{N1}$ and $A_{N2}$ converge to zero in probability, $Q_{mN}(\beta)$ converges pointwise in probability to $Q_{m0}(\beta)$. Uniform convergence follows from standard arguments for M-estimators under compactness (Assumption \ref{a:parameter_II}) and concavity. Since $\sum_m Q_{m0}(\beta)$ is uniquely maximized at $\beta_0$, consistency of $\check\beta_N$ follows.

\paragraph{Part (b): PTLE.}
The scaled sample objective function is
\[
Q_{N}(\beta) = \frac{\sum_m L_{m N}(\beta)}{\sum_m q_{mN}^*} = \frac{\sum_m \sum_\sigma l_{m \sigma}(\beta)}{\sum_m q_{mN}^*}.
\]

Define
\[
Q_{N0}(\beta) = \frac{\sum_m E[L_{m N}(\beta)]}{E[\sum_m q_{mN}^*]} = \frac{\sum_m \sum_\sigma E[l_{m \sigma}(\beta)]}{q_N p_{N}}
\]
where $p_N = \sum_{m=1}^M p_{mN} = E[\sum_m q_{mN}^*] / q_N$. Let $Q_{0}(\beta) = \lim_{N\to\infty} Q_{N0}(\beta)$.
Using demeaned likelihood contributions $\widetilde l_{m \sigma}(\beta) = l_{m \sigma}(\beta) - E[l_{m \sigma}(\beta)]$, decompose:
\begin{align*}
Q_{N}(\beta) - Q_{N0}(\beta) &=
\frac{\sum_{m,\sigma} \widetilde l_{m\sigma}(\beta)}{\sum_m q^*_{mN}}
+
\frac{\sum_{m,\sigma} E[l_{m \sigma}(\beta)]}{\sum_m q^*_{mN}}
-
\frac{\sum_{m,\sigma} E[l_{m \sigma}(\beta)]}{q_N p_{N}} \\
&=
\underbrace{
    \frac{\sum_{m,\sigma} \widetilde l_{m\sigma}(\beta)}{q_N p_{N}}
    \left( \frac{q_N p_N}{\sum_m q^*_{mN}} \right)
}_{A_{N1}'}
+
\underbrace{
     \frac{\sum_{m,\sigma} E[l_{m \sigma}(\beta)]}{ q_N p_{N}}
     \left(
        \frac{ q_N p_{N}}
             {\sum_{m=1}^M q^*_{mN}} - 1
     \right)
}_{A_{N2}'}.
\end{align*}

For the second moment of the numerator of the first term in $A_{N1}'$:
\begin{align*}
E\left[\left(\sum_{m,\sigma} \widetilde l_{m\sigma}\right)^2\right]
&=
\sum_{m,m'} \sum_{\sigma} \sum_{\sigma' \in D(\sigma)} E\left[\widetilde l_{\sigma,m} \widetilde l_{\sigma',m'}\right] \\
&\leq
\sum_{m,m'} \sum_{\sigma} \sum_{\sigma' \in D(\sigma)} \sqrt{E\left[\widetilde l^2_{\sigma,m}\right] E\left[\widetilde l^2_{\sigma',m'}\right]} \\
&\leq
\sum_{m,m'} \sum_{\sigma} \sum_{\sigma' \in D(\sigma)} \frac{E\left[\widetilde l^2_{\sigma,m}\right] + E\left[\widetilde l^2_{\sigma',m'}\right]}{2} \\
&\propto
N^3 M \sum_m \sum_{\sigma} E\left[\widetilde l^2_{\sigma,m}\right] \\
&=
 O(N^3 q_N p_N).
\end{align*}
Most steps are analogous to those in part (a) above.
Note the additional $M$ factor in the fourth line due to the sum over $m'$.
Since $M<\infty$, this term is absorbed in the $O(\cdot)$ notation. 
The final step uses $\sum_m p_{mN} = p_N$ and boundedness of $E[\widetilde l^2_{\sigma,m}]$.

Since $q_N = O(N^4)$, comparing orders shows the second moment $E[(\frac{\sum_{m,\sigma} \widetilde l_{m\sigma}(\beta)}{q_N p_{N}})^2] = O\left(\frac{N^3 q_N p_N}{(q_N p_N)^2}\right) = O\left(\frac{1}{Np_{N}}\right)$.
Mean square convergence follows from Assumption \ref{a:identification_II_weak}(i), which requires $N p_N \rightarrow \infty$.

Now, notice that $\frac{q_N p_N}{\sum_m q^*_{mN}} \to_p 1$ because $\frac{\sum_m q^*_{mN}}{q_N}$ is the sample average corresponding to $p_N = \sum_m p_{mN} = E[\sum_m q^*_{mN}]/q_N$, and its convergence in probability is established by arguments similar to part (a).
This implies $A_{N1}' \to_p 0$, and, with $\frac{\sum_{m,\sigma} E[l_{m \sigma}(\beta)]}{ q_N p_{N}} \to Q_0(\beta)$, it also follows that $A_{N2}' \to_p 0$.

Therefore, $Q_N(\beta) - Q_{N0}(\beta) \to_p 0$. Since $Q_{N0}(\beta) \to Q_0(\beta)$, we have $Q_N(\beta) \to_p Q_0(\beta)$ pointwise. Uniform convergence follows from similar arguments as in Part (a). 
\end{proof}

\printbibliography

\end{document}